\documentclass[journal, letterpaper, 10pt, romanappendices]{IEEEtran}
\usepackage{slashbox}
\usepackage{graphics}
\usepackage{cite}
\usepackage[pdftex]{graphicx}
\usepackage{epstopdf}
\usepackage{epsfig}
\usepackage{latexsym}
\usepackage{amsfonts}
\usepackage{calc}
\usepackage{url}
\usepackage{enumerate}
\usepackage{color}
\usepackage[tbtags]{amsmath}
\usepackage{amssymb}
\usepackage{upref}
\usepackage{dsfont}
\usepackage{multirow}
\usepackage{booktabs}
\usepackage{bigstrut}
\usepackage{rotating}
\usepackage{nth}
\usepackage{breqn}

\newtheorem{theorem}{Theorem}

\newtheorem{lemma}{Lemma}

\newtheorem{proof}[theorem]{Proof}

\begin{document}
\title{Efficient Feedback-Based Scheduling Policies for Chunked Network Codes over Networks with Loss and Delay}

\author{\IEEEauthorblockN{Anoosheh~Heidarzadeh and Amir H. Banihashemi}\\
\IEEEauthorblockA{\small{Department of Systems and Computer Engineering, Carleton University, Ottawa, ON, Canada}
Email: \texttt{\{anoosheh,ahashemi\}@sce.carleton.ca}}
}

\maketitle
\thispagestyle{empty}
\begin{abstract} The problem of designing efficient feedback-based scheduling policies for chunked codes (CC) over packet networks with delay and loss is considered. For networks with feedback, two scheduling policies, referred to as \emph{random push} (RP) and \emph{local-rarest-first} (LRF), already exist. We propose a new scheduling policy, referred to as \emph{minimum-distance-first} (MDF), based on the expected number of innovative successful packet transmissions at each node of the network prior to the ``next'' transmission time, given the feedback information from the downstream node(s) about the received packets. Unlike the existing policies, the MDF policy incorporates loss and delay models of the link in the selection process of the chunk to be transmitted. Our simulations show that MDF significantly reduces the expected time required for all the chunks (or equivalently, all the message packets) to be decodable compared to the existing scheduling policies for line networks with feedback. The improvements are particularly profound (up to about $46\%$ for the tested cases) for smaller chunks and larger networks which are of more practical interest. The improvement in the performance of the proposed scheduling policy comes at the cost of more computations, and a slight increase in the amount of feedback. We also propose a low-complexity version of MDF with a rather small loss in the performance, referred to as \emph{minimum-current-metric-first} (MCMF). The MCMF policy is based on the expected number of innovative packet transmissions prior to the ``current'' transmission time, as opposed to the next transmission time, used in MDF. Our simulations (over line networks) demonstrate that MCMF is always superior to RP and LRF policies, and the superiority becomes more pronounced for smaller chunks and larger networks. We also compare the performances of the existing RP and LRF policies, and show that their relative performance (including which one performs better) depends on delay and loss models, the network length and the chunk size.\end{abstract}

% It is however not known how these policies perform over networks with delay. Our simulation results indicate that the relative performance of RP or LRF compared to the random scheduling policy, which was originally proposed for CC over networks without feedback, depends on the delay distribution. In particular, for sufficiently small chunks and for sufficiently large networks, the advantage of RP over LRF becomes more for delays with larger mean and variance; and for larger chunks or for smaller networks, the advantage of one policy over the other one would reverse for delays with sufficiently small mean and variance.

% Moreover, our results indicate that the relative performance of MDF or MCMF compared to the random scheduling policy depends on the delay distribution. In particular, the advantage of the proposed scheduling policies becomes more for smaller chunks, larger networks and delays with smaller mean and variance.

\section{Introduction}
There has recently been a surge of interest in the application of coding schemes over packet networks, e.g., for large-scale file sharing~\cite{GR:2005,GMR:2006,GMR2:2006,NL:2007}. In particular, random linear network codes (dense codes) are known to reduce the expected \emph{delivery time}\footnote{For a given code over a given network, ``delivery time'' is defined as the minimum time required for communicating the message(s) of the source node(s) to the sink node(s) throughout the network.} in comparison to routing protocols over networks with arbitrary link delays and erasures \cite{MHL:2006}. This, however, comes at the cost of large computational complexity of the coding algorithms. To reduce the coding cost of dense codes, \emph{chunked codes} (CC) and \emph{overlapped chunked codes} (OCC) were proposed in \cite{MHL:2006,SZK:2009,HB:2010,HB:2011}. These codes operate by dividing the original message at the source node into non-overlapping or overlapping chunks, respectively, and each non-sink network node schedules the transmission of the chunks at random by using a dense code. The coding cost of these codes are linear in the size of the chunks, smaller than that of dense codes in general. This however comes at the expense of larger expected delivery time.

Originally, CC and OCC were designed for and analyzed over arbitrary network realizations\footnote{Here, we use the term ``network realization'' to refer to a member of the ensemble of networks with random link erasures and random link delays.} (worst-case analysis) in the absence of feedback \cite{MHL:2006,SZK:2009,HB:2010,HB:2011}. In real-world scenarios, however, feedback is often available. One thus expects to reduce the expected delivery time when the feedback is properly used. In other words, the scheduling of chunks uniformly at random, referred to as the \emph{random scheduling policy}, might result in wasting a large number of transmission opportunities. The reason is that, such a scheme treats those chunks which are already decodable or are short of only a few more packets to be decodable, similar to those chunks which need a much larger number of packets to be decodable. The problem is therefore how to use feedback and devise a scheduling policy (for CC)\footnote{CC are the focus of this paper, and in the case of OCC, the generalization of the proposed scheduling policies is not trivial, and is beyond the scope of this paper.} which outperforms the random scheduling policy.\footnote{It should be noted that routing itself is a special case of chunked coding with the number of chunks equal to the number of message packets at the source node. On the other hand, the design of efficient feedback-based scheduling policies for routing over networks with delay and loss is still an open problem. Thus, the scheduling policies proposed in this paper can also be used for distributed routing over any network topology.}

In earlier related works \cite{WL:2007,XZWX:2008}, two general policies, which utilize the feedback information to schedule the chunks, were proposed. These scheduling policies were referred to as \emph{random push} (RP) and \emph{local-rarest-first} (LRF), respectively. Both RP and LRF scheduling policies, employed by the transmitting node over a link, use the number of \emph{innovative packets}\footnote{A packet is said to be ``innovative'' at a node if its global encoding vector (i.e., the vector of the coefficients which represent the mapping between the packet and the message packets at the source node) is linearly independent of the global encoding vectors of the packets previously received by the node.} which have been received by the receiving node of the link till the current transmission time. In RP \cite{WL:2007}, the node transmitting over a link chooses a chunk uniformly at random from the set of chunks that still need more innovative packets to be decodable at the receiving node of the link. In LRF \cite{XZWX:2008}, however, the transmitting node chooses a chunk which needs the largest number of innovative packets at the receiving node.

% (hyper-)\footnote{A hyper-link consists of a transmitting node, a collection of receiving nodes and a collection of single links, each of which connects the transmitting node to one of the receiving nodes in the underlying collection.}
In both RP and LRF policies, at each time instant, a transmitting node makes a decision based on the set of received packets at the receiving node up to that point in time. In the presence of delay, however, such a decision fails to take into account the contribution of the (successful) packets that were transmitted earlier to the receiving node (over the same link or the other links with different transmitting nodes but with the same receiving node as the underlying one) but have not still been received due to the delay. One thus expects to be able to improve these scheduling polices over the networks with delay. Related to this, one should note that both RP and LRF policies utilize the feedback information in order to count the number of innovative packets delivered to the receiving node. This, however, disregards the packets which have been (successfully) transmitted but still have not been received. Nevertheless, the more are such transmissions corresponding to a chunk, the larger is the probability of delivering more useful information about the underlying chunk. In addition, thanks to the literature on modeling the packet loss and the packet delay over networks with feedback (e.g., see \cite{Paxson:1997,Moon:2000} and references therein), such probabilities can be computed with a reasonably high accuracy. This however comes at the cost of more computation at the network nodes. In this paper, we do not focus on the problem of modeling the loss and the delay of the network links, the estimation of the model parameters, and the tradeoff between the accuracy and the computational complexity. Throughout this paper, we assume that the models of the packet loss and the packet delay of each link are known at the transmitting/receiving nodes of the link. The question then is how to properly use (i) the knowledge about the sets of transmitted and received packets over a link, (ii) the knowledge about the sets of received packets over the rest of the links with the same receiving node (as that of the underlying link), and (iii) the knowledge about the link model parameters, in order to decrease the expected delivery time. In an attempt to answer this question, the main contributions of this work are as follows:

\begin{itemize}
\item{We propose a new scheduling policy for chunked codes, referred to as \emph{minimum-distance-first} (MDF), devised based on a new metric, i.e., the expected number of innovative packets transmitted prior to the \emph{next} transmission time.}
\item{Aiming at the design of a low-complexity version of MDF, we also propose another scheduling policy for chunked codes, referred to as \emph{minimum-current-metric-first} (MCMF), which works based on the expected number of innovative packets transmitted prior to the \emph{current} transmission time.}
\item{We show through extensive simulations over line networks (as the simplest non-trivial network topology for the unicast problem\footnote{In a practical scenario, the line network topology would be the right model for an overlay network where the sequence of nodes are determined by an underlying routing protocol.}) that (i) the MDF scheduling policy performs (near) optimal in the sense of minimizing the expected delivery time; (ii) both MDF and MCMF are always superior to LRF and RP with respect to the expected delivery time, and that the improvements are particularly large for smaller chunks and larger networks as well as delays with smaller mean and variance; (iii) MCMF is always inferior to MDF, but the performance loss becomes smaller for larger chunks, smaller networks and delays with smaller mean and variance; (iv) the relative performance of MDF or MCMF compared to the random scheduling policy depends on the delay distribution. In particular, the advantage of the proposed scheduling policies becomes more profound for smaller chunks, larger networks and delays with smaller mean and variance; (v) for sufficiently small chunks and sufficiently large networks, RP is superior to LRF, and the advantage is more evident for smaller chunks, larger networks, and for delays with larger mean and variance.}
\end{itemize}

\vspace{-.15 cm}
\section{Model and Assumptions}\label{sec:MA}
\subsection{Network Topology}\label{subsec:NetworkTopology}
We consider a unicast problem over a network with $L$ (directed) links with any arbitrary topology, where one \emph{source} node (which is not the receiving node of any link) which possesses $k$ message packets, each a string of bits, and one \emph{sink} node (which is not the transmitting node of any link) which demands the message packets, are connected through the rest of the network nodes, called \emph{internal} nodes.

We also consider an arbitrary ordering of the $L$ links in the network, and associate a label (i.e., a unique integer in $\{1,\ldots,L\}$) to each link. For every $1\leq i\leq L$, let $\mathcal{I}^{(i)}_{R}$ (or $\mathcal{I}^{(i)}_{T}$) be the set of labels of the links whose receiving nodes (or transmitting nodes) are the same as the receiving node (or the transmitting node) of the $i\textsuperscript{th}$ link.

\vspace{-.25 cm}
\subsection{Loss and Delay Models}\label{subsec:ScheduleModel}
In the following, we describe the loss and the delay models used in this work. One, however, should note that the application of the scheduling policies discussed in this paper is not restricted to a specific model of delay or loss.

Each link is modeled by a memoryless erasure channel with a constant probability of erasure, i.e., for every $1\leq i\leq L$, the $i\textsuperscript{th}$ link has a probability of erasure $p^{(i)}_e$, for some $0\leq p^{(i)}_e\leq 1$ (each packet transmitted over the $i\textsuperscript{th}$ link is either erased with probability $p^{(i)}_e$, or is successfully received with probability $1-p^{(i)}_e$). We also assume that the links are affected by erasures independently. The special case with no erasure (i.e., $p^{(i)}_e=0$, for all $i$) is referred to as the \emph{lossless} case.

Each successful (not erased) packet transmitted at time $n$ over the $i\textsuperscript{th}$ link is assumed to experience a delay $Z^{(i)}_n\in \mathds{Z}^{+}\setminus\{0\}$, i.e., the packet arrives at time $n+Z^{(i)}_n$, where $Z^{(i)}_n$ is a random variable with the probability mass function \begin{equation}\label{EQ2}P_{Z^{(i)}_n}[z]=\int_{z-1}^{z}{f_{R^{(i)}_n}(r)dr},\end{equation} for every $z\in\mathds{Z}^{+}\setminus\{0\}$, where $f_{R^{(i)}_n}(r)$, $r\in\mathds{R}^{+}$, is a probability density function. Note that $Z^{(i)}_n$ is a discrete version of the continuous random variable $R^{(i)}_n$. For all $n$, $Z^{(i)}_n$'s are assumed to be independent and identically distributed.\footnote{Here, without loss of generality, we have assumed that the time unit is equal to the inverse of the packet transmission rate at each network node.} The special case with all the delays equal to $1$ (i.e., $Z^{(i)}_n=1$, for all $i$ and $n$) is also referred to as the \emph{unit-delay} model.

% (right, or left, or both)
% a ``Gilbert-Elliot channel,'' a two-state discrete-time Markov chain with the states: \emph{good} and \emph{bad}. For all $1 \leq i\leq l$, the channels are independent, and each state over the $i\textsuperscript{th}$ link has a specific constant probability of erasure: $e^{(i)}_G$ in the good state, and $e^{(i)}_B$ in the bad state ($e^{(i)}_G < e^{(i)}_B$). For all $n\in \mathds{Z}^{+}$, the state of the $i\textsuperscript{th}$ link at time $n$, $S^{(i)}_{n}$, is subject to change from time to time. Let $p^{(i)}_{GG}$ ($p^{(i)}_{BB}$) be a specific constant probability that the next state is good (bad), given that the current state is good (bad). We thus have the following state transition matrix for the $i\textsuperscript{th}$ link: \[\left[ \begin{array}{cc} p^{(i)}_{GG} & 1-p^{(i)}_{GG} \\ 1-p^{(i)}_{BB} & p^{(i)}_{BB} \end{array}\right].\] Now, the probability of $S^{(i)}_{n}$ being good, $p^{(i)}_G$, or bad, $p^{(i)}_B$, can be written as $p^{(i)}_{G}={(1-p^{(i)}_{BB})}/{[(1-p^{(i)}_{BB})+(1-p^{(i)}_{GG})]}$, or $p^{(i)}_{B}={(1-p^{(i)}_{GG})}/{[(1-p^{(i)}_{BB})+(1-p^{(i)}_{GG})]}$, respectively. Also, the probability of a packet transmitted over the $i\textsuperscript{th}$ link to be erased is \begin{equation}\label{EQ1}p^{(i)}_e=p^{(i)}_G e^{(i)}_G + p^{(i)}_B e^{(i)}_B.\end{equation}

\subsection{Information Available at Network Nodes}\label{subsec:Assumptions}
Each node is assumed to: (i) know the loss and delay models (called the \emph{link model}) of each link over which it transmits a packet, and (ii) store all the packets it transmits/receives along with their departure/arrival times. In particular, each node keeps the record of all the packets it transmits. Moreover, right after the reception of a new packet, each node stores the packet if the packet is innovative to the set of all its previously received innovative packets, or discards the packet, otherwise. Note that, in the case of transmitted packets, it suffices that each node stores the global encoding vector of each packet included in the packet header (which is often much smaller than the packet payload), instead of storing both the packet header and the packet payload. In the case of the received packets, both the packet header and the packet payload need to be stored. This is not however a burden when the internal nodes also demand all the message packets (e.g., in the application of peer-to-peer file sharing).

% , and discards one as soon as it receives a feedback indicating that the packet is received by the receiving node (if ever)

\section{Problem Statement}\label{sec:ProblemStatement}
In CC, the $k$ message packets, at the source node, are divided into $q$ disjoint subsets, called \emph{chunks}, each of size $k/q$. Each non-sink node, at each time $n$, chooses a chunk, say $\omega\in [q]:=\{1,\dots,q\}$, based on a scheduling policy, and by applying a specific coding algorithm to its previously received packets pertaining to chunk $\omega$ (\emph{$\omega$-packets}\footnote{For every $\omega\in[q]$, a packet is called an ``$\omega$-packet'' if it can be written as a linear combination of the message packets belonging to the chunk $\omega$.}) generates/transmits a new $\omega$-packet. The sink node is able to decode the chunk $\omega$ so long as it receives $k/q$ innovative $\omega$-packets.

Let $\mathcal{T}^{(i)}_n$ and $\mathcal{R}^{(i)}_n$ be the set of packets transmitted and received over the $i\textsuperscript{th}$ link till time $n$, respectively, and $\mathcal{T}^{(i)}_n (\omega)$ and $\mathcal{R}^{(i)}_n (\omega)$ be the set of the $\omega$-packets in $\mathcal{T}^{(i)}_n$ and $\mathcal{R}^{(i)}_n$, respectively. Note that, by the assumption made in Section~\ref{subsec:Assumptions}, all the $\omega$-packets in $\mathcal{R}^{(i)}_n (\omega)$ are innovative, and none of the $\omega$-packets in $\mathcal{T}^{(i)}_n(\omega)\setminus\mathcal{R}^{(i)}_n (\omega)$ is received yet.

Based on the presence or absence of feedback in the network, one can devise different scheduling policies. If no feedback is available, for all $1\leq i\leq L$, at time $n$, the transmitting node of the $i\textsuperscript{th}$ link knows $\bigcup_{j\in \mathcal{I}^{(i)}_{T}} \mathcal{T}^{(j)}_n$, but has no information about $\mathcal{R}^{(j)}_n$, for any $j\in \mathcal{I}^{(i)}_{R}$. However, in the presence of feedback, whenever a packet arrives, the receiving node sends a delay-free and error/erasure-free acknowledgment to the transmitting node, along with a message containing information about the departure time of the arrived packet. The receiving node will also send messages to all the transmitters of the links in $\mathcal{I}^{(i)}_{R}$ to convey information about the received packet. In addition to being delay-free, the feedback channels are assumed to have no error/erasure.\footnote{It should be noted that the assumptions of delay-free and error/erasure-free feedback are reasonable because the data rate over the channel used for feedback is often very low compared to that of the channels used for forward packet transmission.} Thus, in the presence of feedback, the transmitting node has full knowledge of $\bigcup_{j\in \mathcal{I}^{(i)}_{T}} \mathcal{T}^{(j)}_n$ and $\bigcup_{j\in \mathcal{I}^{(i)}_{R}} \mathcal{R}^{(j)}_n$.

The problem, at the transmitting node of the $i\textsuperscript{th}$ link, for every $i$, at every time $n$, is how to select a chunk and to code it over the $i\textsuperscript{th}$ link, given the link model, in order to minimize the expected delivery time (where the expectation is taken over all the realizations of the code and the network), i.e., the expected time required for all the chunks to be decodable, when only $\bigcup_{j\in \mathcal{I}^{(i)}_{T}} \mathcal{T}^{(j)}_n$, or both $\bigcup_{j\in \mathcal{I}^{(i)}_{T}} \mathcal{T}^{(j)}_n$ and $\bigcup_{j\in \mathcal{I}^{(i)}_{R}} \mathcal{R}^{(j)}_n$ are known.

\section{Existing Solutions}\label{sec:ES}
\subsection{Random Scheduling Policy}
Originally, CC were designed for networks with no feedback~\cite{MHL:2006}. In this scenario, one possible strategy for a transmitting node is to use a \emph{fully random scheduling policy}, specified as follows: The node chooses a chunk, say $\omega$, uniformly at random; if the node is source, it generates/transmits a random linear combination of all the packets belonging to the chunk $\omega$, and if it is internal, it generates/transmits a random linear combination of all its previously received $\omega$-packets. Note that, when there is no information about $\bigcup_{j\in \mathcal{I}^{(i)}_{R}} \mathcal{R}^{(j)}_n$ at the transmitting node of the $i\textsuperscript{th}$ link at time $n$, it is not clear how to use the information about $\bigcup_{j\in \mathcal{I}^{(i)}_{T}} \mathcal{T}^{(j)}_n$.

To speed up the transmission of information over packet networks with feedback, CC were adopted in \cite{WL:2007} and \cite{XZWX:2008} with feedback-based scheduling policies. The idea behind such scheduling policies is that in the random scheduling policy, a transmitting node might misuse a number of transmission opportunities by transmitting some information which is not useful at the receiving node as it might be contained in previously received packets. This, therefore, increases the expected delivery time. The feedback, however, can inform the transmitting node about the set of innovative packets previously received at the receiving node, and hence the transmitting node can, in turn, avoid transmitting packets which are not innovative (with respect to the set of packets available at the receiving node) at the time of transmission.

\subsection{RP and LRF Scheduling Policies}
In \cite{WL:2007}, Wang and Li proposed a \emph{priority-based} randomized scheduling policy, referred to as \emph{random push} (RP), based on the number of innovative packets at the receiving node. In RP, for every $1\leq i\leq L$, the node transmitting over the $i\textsuperscript{th}$ link, at each time $n$, randomly chooses a chunk, say $\omega$, from the set of chunks satisfying the condition \begin{equation}\label{EQ3}\left|\bigcup_{j\in \mathcal{I}^{(\hat{i})}_{R}} \mathcal{R}^{(j)}_n(\omega)\right|>\left|\bigcup_{j\in \mathcal{I}^{(i)}_{R}} \mathcal{R}^{(j)}_n(\omega)\right|,\end{equation} where $\hat{i}$ is the label of some link whose receiving node is the transmitting node of the $i\textsuperscript{th}$ link. The transmitting node, then, generates/transmits an innovative $\omega$-packet with respect to the set $\bigcup_{j\in \mathcal{I}^{(i)}_{R}} \mathcal{R}^{(j)}_n(\omega)$, by random\footnote{The transmitting node keeps generating random linear combinations till it generates an innovative packet with respect to the set of the packets at the receiving node.} linear combination of its previously received innovative $\omega$-packets. Further, if there is no $\omega$, such that condition \eqref{EQ3} holds, the transmitting node does not transmit a packet, since, in this case, all the information available at the transmitting node is already available at the receiving node.

%(for every $\omega$, $\mathcal{R}^{(0)}_n(\omega)$ is the set of all the $k/q$ message packets in the chunk $\omega$).

More recently, in \cite{XZWX:2008}, Xu \emph{et al.} introduced a deterministic scheduling policy, referred to as \emph{local-rarest-first} (LRF), by prioritizing the chunks based on the same metric as in~\cite{WL:2007}. In LRF, for every $1\leq i\leq L$, the node transmitting over the $i\textsuperscript{th}$ link, at each time $n$, selects a chunk, say $\omega$, such that (i) $\omega$ satisfies condition~\eqref{EQ3} and (ii) the size of the set $\bigcup_{j\in \mathcal{I}^{(i)}_{R}} \mathcal{R}^{(j)}_n(\omega)$ is the minimum; and generates/transmits an $\omega$-packet innovative to the set $\bigcup_{j\in \mathcal{I}^{(i)}_{R}} \mathcal{R}^{(j)}_n(\omega)$. If there exist multiple chunks satisfying both conditions (i) and (ii), one of these chunks will be selected (uniformly) at random.

\section{Proposed Scheduling Policies}
\subsection{Motivation}
The existing scheduling policies based on feedback, as discussed in Section~\ref{sec:ES}, prioritize the chunks according to the number of innovative packets at the receiving nodes at the time of transmission. In networks with delay, however, there is no guarantee that a packet which is innovative with respect to the set of packets at a receiving node at the time of transmission, would still stay innovative at the time of reception. There might be packets transmitted earlier that arrive at the receiving node at some point in time later than the time of the current transmission, but before the reception of the current transmission. Thus the set of received packets at the time of the reception of the currently transmitted packet might differ from the set at the time of the current transmission, and at that point, the currently transmitted packet might no longer be innovative.

This event particularly depends on the set of packets that are transmitted earlier but have not been received yet. The earlier a packet is transmitted, the more likely it generally is for that packet to arrive sooner, but the less likely is for that packet to deliver some useful information if it arrives.

In this work, given the link model, and the information about the set of packets transmitted by the transmitting node of a given link and the set of packets received by the receiving node of that link until a given time,\footnote{Note that if the information about the packets that were transmitted over the other links connected to the receiving node and still not received was also available at the transmitting node, a more accurate decision could be made about which chunk to choose and what packet to transmit. However, attaining such information might not be possible due to the network topology. We thus assume that such information is not available in the rest of the paper. One should also note that in the case of line networks simulated in this work, since every receiving node only receives information from one node, such situations do not apply.} we calculate the probabilities of the above mentioned events.\footnote{In earlier works~\cite{WL:2007} and\cite{XZWX:2008}, no assumption has been made about the link model, and hence such probabilities could not be calculated.} We then use these probabilities in the proposed scheduling policies. In particular, we use the expected number of innovative packets ``transmitted'' prior to the next or the current transmission time, as the metric. One should note that this is in contrast to the number of innovative packets ``received'' prior to the current transmission time, used in both~\cite{WL:2007} and~\cite{XZWX:2008}. The proposed scheduling policies are referred to as \emph{minimum-distance-first} (MDF) and \emph{minimum-current-metric-first} (MCMF), respectively.

\subsection{MDF Scheduling Policy}\label{subsec:MDF}
For every $\omega,\nu\in [q]$, let $x_{n}^{(i)}(\nu|{\omega})$ represent the expected number of innovative $\nu$-packets transmitted over the $i\textsuperscript{th}$ link prior to the next transmission time ($n+1$), given that, at the current transmission time ($n$), an innovative ${\omega}$-packet (with respect to the packets in the set $\bigcup_{j\in \mathcal{I}^{(i)}_{R}} \mathcal{R}^{(j)}_n$) is transmitted over the $i\textsuperscript{th}$ link.\footnote{Note that, in the definition of the metric $x_{n}^{(i)}(\nu|{\omega})$, the expectation is taken based on the feedback information available at time $n$.} The calculation of $x_{n}^{(i)}(\nu|\omega)$ is deferred to Section~\ref{subsec:MetricCalculation}. For every $\omega$, let $\boldsymbol{x}_{n}^{(i)}(\omega)$ represent the vector $[x_{n}^{(i)}(1|\omega),\ldots,x_{n}^{(i)}(q|\omega)]$. Let $d_{n}^{(i)}(\omega)$ denote the Euclidean distance between the vector $\boldsymbol{x}_{n}^{(i)}(\omega)$ and the ($q$-dimensional) vector $[k/q,\ldots,k/q]$.

In MDF, the node transmitting over the $i\textsuperscript{th}$ link, at each time $n$, selects the chunk $\omega$ such that (i) $\omega$ satisfies condition~\eqref{EQ3} and (ii) $d_{n}^{(i)}(\omega)$ is minimized. That is, the transmitting node chooses a chunk whose transmission at the present time minimizes the distance between the vector of the ``expected'' number of innovative packets transmitted (over the $i\textsuperscript{th}$ link) prior to the next transmission time and the vector $[k/q,\ldots, k/q]$. Note that reaching the latter vector is the goal of the network coding solution (i.e., all the chunks can be successfully decoded so long as there are $k/q$ innovative packets pertaining to each chunk). Therefore, the MDF scheduling policy is devised to achieve this goal in a greedy fashion by taking the largest possible step towards (by obtaining the smallest distance from) the target. Despite the fact that the MDF scheduling policy is heuristic, in Section~\ref{subsec:OptimalityMDF}, we present some experimental results that indicate the (near) optimality of this scheme over line networks (where the source node and the sink node are connected through the internal nodes connected in tandem) in the sense of minimizing the expected delivery time.\footnote{It should be noted that, currently, no analytical result on the proposed or the existing scheduling policies, for a given link model, is available. The difficulty of such analysis stems from the high-level of dependency between the large number of random variables involved in the process.} Similar to LRF and RP, in MDF, if chunk $\omega$ is chosen, the transmitting node randomly generates/transmits an $\omega$-packet innovative to the packets at the receiving node.

\subsection{MCMF Scheduling Policy}\label{subsec:MCMF}
For every $\nu\in [q]$, let $y^{(i)}_n(\nu)$ represent the expected number of innovative $\nu$-packets transmitted over the $i\textsuperscript{th}$ link prior to the current transmission time $n$. It should be clear that, by the definition, $y^{(i)}_n(\nu)=x^{(i)}_n(\nu|\omega)$, for any $\omega\neq\nu$.\footnote{Both the metrics $y^{(i)}_n(\nu)$ and $x^{(i)}_n(\nu|\omega)$, i.e., the expected number of innovative packet transmissions prior to the current and the next transmission time, respectively, pertaining to any chunk $\nu$ ($\nu\neq\omega$), are equal since they both rely on the same (feedback) information till the current transmission time $n$.} In MCMF, the node transmitting over the $i\textsuperscript{th}$ link, at each time $n$, selects the chunk $\omega$ such that (i) $\omega$ satisfies condition~\eqref{EQ3} and (ii) $y^{(i)}_{n}(\omega)$ is minimized.

% The calculation of $y^{(i)}_n(\nu)$ is deferred to Section~\ref{subsec:MetricCalculation}.
\begin{lemma}\label{lem:1}For networks with unit-delay links (defined in Section~\ref{subsec:ScheduleModel}), MDF policy reduces to MCMF policy.\end{lemma}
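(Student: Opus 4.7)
The plan is to show that under the unit-delay hypothesis the MDF score $d_n^{(i)}(\omega)$ depends on $\omega$, among all chunks satisfying~\eqref{EQ3}, only through $y_n^{(i)}(\omega)$ and in a strictly monotone way, so that the MDF and MCMF rules pick the same chunk. The starting point is the identity already noted right after the definition of MCMF: for every $\nu\neq\omega$, $x_n^{(i)}(\nu|\omega)=y_n^{(i)}(\nu)$, because the tentative transmission of an $\omega$-packet at time $n$ changes neither the feedback information available at that time nor the count of innovative $\nu$-packet transmissions for any other chunk $\nu$. This identity is delay-model agnostic and does not require the unit-delay assumption.

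Next I would use unit delay to pin down the single remaining coordinate, $x_n^{(i)}(\omega|\omega)$. The only new transmission that could contribute beyond $y_n^{(i)}(\omega)$ is the $\omega$-packet that is scheduled at time $n$ itself. Under the unit-delay model on link $i$ this packet either is erased or arrives at time $n+1$ with no competing in-transit packet on link $i$, and by construction it is innovative with respect to $\bigcup_{j\in\mathcal{I}_R^{(i)}}\mathcal{R}_n^{(j)}$ at the moment of transmission. Consequently its expected innovative contribution equals some positive constant $c$ that depends on the link parameters but is the same for every candidate $\omega$; that is, $x_n^{(i)}(\omega|\omega)=y_n^{(i)}(\omega)+c$. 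This step is the main obstacle, and is precisely where the unit-delay hypothesis is essential: with a non-degenerate delay, other packets transmitted earlier on link $i$ might arrive between times $n$ and the eventual arrival of the new packet, so the expected innovation contribution of the scheduled packet would couple to the history $\mathcal{T}_n^{(i)}(\omega)$ and hence to $\omega$, breaking the clean reduction below.

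With the two relations above in hand, the squared MDF distance rearranges as
\begin{equation*}
d_n^{(i)}(\omega)^2=\sum_{\nu\in[q]}\left(y_n^{(i)}(\nu)-k/q\right)^2+2c\left(y_n^{(i)}(\omega)-k/q\right)+c^2,
\end{equation*}
in which the first sum and the additive $c^2$ are independent of $\omega$. Since $c>0$ and $d_n^{(i)}(\omega)\geq 0$, minimizing $d_n^{(i)}(\omega)$ over the chunks satisfying~\eqref{EQ3} is equivalent to minimizing $y_n^{(i)}(\omega)$ over the same set, which is exactly the MCMF selection rule. Both policies also generate an innovative $\omega$-packet in the same way, so the two policies coincide, yielding the claim.
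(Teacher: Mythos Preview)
Your argument is correct and follows the same line as the paper's proof: both rest on the observation that, under unit delay, transmitting chunk $\omega$ at time $n$ leaves $x_n^{(i)}(\nu|\omega)=y_n^{(i)}(\nu)$ for $\nu\neq\omega$ and bumps $x_n^{(i)}(\omega|\omega)$ by a fixed positive amount independent of $\omega$. Your explicit expansion of $d_n^{(i)}(\omega)^2$ makes precise the ``hence'' step that the paper leaves implicit, but otherwise the approaches coincide.
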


\begin{proof}Since the delay values are all one, at the current transmission time, there is no randomness in the number of innovative packet transmissions pertaining to any chunk prior to this time. Thus, by transmitting a given chunk at the current transmission time, the expected number of innovative packet transmissions (prior to the next transmission time) pertaining to that chunk increases, yet, this number does not change for the rest of the chunks. The amount of this increase by transmitting any chunk is the same as that by transmitting any other chunk, and hence, in such a case, MDF reduces to MCMF, which operates by choosing a chunk which has the smallest (expected) number of innovative packets transmitted prior to the current transmission.\end{proof}

For a network with a general delay model, MDF outperforms MCMF in terms of the expected delivery time. The performance advantage is more profound for random delays with larger mean and variance, for larger networks and for smaller chunks. The performance improvement for MDF policy however, comes at the expense of higher computational complexity.

\subsection{Metric Calculation}\label{subsec:MetricCalculation}
For every $\omega,\nu\in [q]$, we need to calculate the metric $x^{(i)}_n(\nu|\omega)$ for the MDF policy. Note that, by the definition, in order to calculate $x^{(i)}_n(\nu|\omega)$, we focus on the sets $\bigcup_{j\in \mathcal{I}^{(i)}_{R}} \mathcal{R}^{(j)}_n(\nu)$ and $\mathcal{T}^{(i)}_n(\nu)$, and assume that, at the current time $n$, an innovative $\omega$-packet with respect to the set $\bigcup_{j\in \mathcal{I}^{(i)}_{R}} \mathcal{R}^{(j)}_n(\omega)$ is transmitted over the $i\textsuperscript{th}$ link.

% (i) the metric $x^{(i)}_n(\nu|\omega)$, and (ii) the metric $y^{(i)}_n(\nu)$. Note that, by the definition, in order to calculate $x^{(i)}_n(\nu|\omega)$, we shall focus on the sets $\mathcal{R}^{(i)}_n(\nu)$ and $\mathcal{T}^{(i)}_n(\nu)$, and assume that, at the current time $n$, an innovative $\omega$-packet with respect to the set $\mathcal{R}^{(i)}_n(\omega)$ is transmitted. In order to calculate $y^{(i)}_n(\nu)$, we also focus on the sets $\mathcal{R}^{(i)}_n(\nu)$ and $\mathcal{T}^{(i)}_n(\nu)$, but do not make any assumption on the transmission at time $n$.

For every chunk $\nu$, every link $i$, and every time $n$, let $\rho^{(i)}_n(\nu)=|\bigcup_{j\in \mathcal{I}^{(i)}_{R}} \mathcal{R}^{(j)}_n(\nu)|$ and $\tau^{(i)}_n(\nu)=|\mathcal{T}^{(i)}_n(\nu)\setminus\bigcup_{j\in \mathcal{I}^{(i)}_{R}} \mathcal{R}^{(j)}_n(\nu)|$. Furthermore, let $\mathcal{U}^{(i)}_n(\nu)$ denote the set $\bigcup_{j\in \mathcal{I}^{(i)}_{R}} \mathcal{R}^{(j)}_n(\nu)$. For the ease of notation, hereafter, we often drop the argumant $\nu$, the subscript $n$, and superscript $i$, unless there is a possibility for confusion. For example, we use the notations $\rho$ and $\tau$, instead of $\rho^{(i)}_n(\nu)$ and $\tau^{(i)}_n(\nu)$, respectively.

Let $\mathcal{N}_{r}$ and $\mathcal{N}_{t}$ be the set of the time indices that the $\nu$-packets in $\mathcal{U}$ and $\mathcal{T}\setminus\mathcal{U}$ are received and transmitted, respectively, in an increasing order, i.e., $\mathcal{N}_{r}=\{r_1,\dots,r_{\rho}\}$, and $\mathcal{N}_{t}=\{t_1,\dots,t_{\tau}\}$, so that $r_1\leq \cdots \leq r_{\rho}$, and $t_1\leq \cdots \leq t_{\tau}$. To lower the computational complexity of the scheduling policy, for some constant integer $m\leq \tau$, we focus on the set of $m$ packets in $\mathcal{T}\setminus\mathcal{U}$, transmitted at the time indices $\mathcal{N}_{t,m}=\{t_{\tau-m+1},\ldots,t_{\tau}\}$, i.e., the last $m$ packets transmitted but not received up to time $n$. Taking into account only $m$ out of $\tau$ delayed packets, however, results in an approximation of $x_n^{(i)}(\nu|\omega)$.\footnote{The smaller is the value of $m$, the lower is the complexity of the scheduling policy (and the smaller is the memory requirement at the network nodes). This is at the expense of larger approximation error.}

Let $\tau^{*}=\tau-m+1$. For the case of $\omega=\nu$, we define $\boldsymbol{z}=\{z_{t_{\tau^{*}}},\dots,z_{t_{\tau}},z_{n}\}$ as the sequence of the delays that the packets transmitted at time indices $\{\mathcal{N}_{t,m},n\}$ experience, assuming that all these $m+1$ packets arrive (the last packet is the one that, we assume, is transmitted at the time $n$), i.e., for every $\tau^{*}\leq j\leq \tau$, the packet transmitted at time $t_j$ arrives at time $t_j+z_{t_j}$, and the packet transmitted at time $n$ arrives at time $n+z_n$. For the reason that none of these packets has been received till time $n$, for every $j$, the delay $z_{t_j}$ is bounded from below by $n-t_j$, for every possible delay sequence $\boldsymbol{z}$. For the other cases of $\omega\neq\nu$, due to the fact that the packet which is assumed to be transmitted at time $n$ is an $\omega$-packet, the sequence $\boldsymbol{z}$ excludes the term $z_n$. In such cases, we denote the truncated sequence $\boldsymbol{z}$ by $\boldsymbol{z}_T$.

The delays are, however, random variables that can sometimes take very large values, and it is thus not practical to consider the set of all possible delay sequences $\boldsymbol{z}$. To lower the computational complexity of the scheduling policy, we introduce a constant integer $\Delta$, so that if a packet transmitted prior to time $n$ (or transmitted at time $n$) is assumed to arrive later than ${\Delta}$ time units after the time $n$ (or $n+1$), it will be treated as an erased packet in our calculations. We, thus, focus on a subset of all possible delay sequences, referred to as the \emph{desirable sequences}, so that at time $n$, for every $\omega\neq \nu$, the delay of the $j\textsuperscript{th}$ packet ($\tau^{*}\leq j\leq \tau$) is bounded above by $n-t_j+\Delta$, and for $\omega=\nu$, the delay of the last packet (assumed to be transmitted at time $n$) is bounded above by $\Delta$. For the desirable sequences, we thus have: for every $\tau^{*}\leq j\leq \tau$, $n-t_j<z_{t_j}\leq n-t_j+\Delta$, and $0<z_{n}\leq\Delta$.\footnote{The smaller is the choice of $\Delta$, the smaller is the number of desirable delay sequences to be taken into account and hence the lower is the computational complexity of the scheduling policy. This however comes at the expense of larger approximation error.}

For the sake of brevity, hereafter, we focus on the case with $\omega=\nu$. Clearly, by removing the terms related to the packet transmission at time $n$ and its delay value $z_n$, the other cases with $\omega\neq\nu$ will be covered.\footnote{One should note that, for a fixed chunk $\nu$, the metrics $x^{(i)}_n(\nu|\omega)$, for all $\omega\neq\nu$, are the same (independent of $\omega$), and hence need to be calculated only once.}

Let $t_{\tau+1}\triangleq n$. For every desirable $\boldsymbol{z}$, suppose that its elements are reordered as follows: let the sequence $\{t'_{\tau^{*}}+z_{t'_{\tau^{*}}},\dots,t'_{\tau+1}+z_{t'_{\tau+1}}\}$ represent the sequence $\{t_{\tau^{*}}+z_{t_{\tau^{*}}},\dots,t_{\tau+1}+z_{t_{\tau+1}}\}$ sorted in an increasing order, i.e., $t'_{\tau^{*}}+z_{t'_{\tau^{*}}}\leq \dots \leq t'_{\tau+1}+z_{t'_{\tau+1}}$, and for every $\tau^{*}\leq i\leq \tau+1$, there exists a unique $\tau^{*}\leq j\leq \tau+1$, such that $t_i=t'_j$. For every sequence $\boldsymbol{z}$, hereafter, we use its corresponding reordered sequence based on the reception time indices, and adopt the same notation $\boldsymbol{z}$ to represent it.

For every desirable $\boldsymbol{z}$, the probability that a packet, which is transmitted over the $i\textsuperscript{th}$ link at time $t'_j$, but not received till time $n$, arrives after a delay $n-t'_j<z_{t'_j}\leq n-t'_j+\Delta$, for every $\tau^{*}\leq j\leq \tau$, is \begin{equation}\label{EQ4}p[z_{t'_j}]=\frac{P_{Z^{(i)}_{n}}[z_{t'_j}]}{1-\sum_{1\leq z\leq n-t'_j}P_{Z^{(i)}_{n}}[z]}\cdot\left(1-p^{(i)}_e\right),\end{equation} and \begin{equation}\label{EQ5}p[z_{t'_{\tau+1}}]=P_{Z^{(i)}_{n}}[z_{t'_{\tau+1}}]\cdot\left(1-p^{(i)}_e\right),\end{equation} where $P_{Z^{(i)}_{n}}$ is given by~\eqref{EQ2}, and $p^{(i)}_e$ is the probability of erasure over the $i\textsuperscript{th}$ link.

The packets which will (will not) arrive at the receiving node till the next $\Delta$ time units are referred to as \emph{on-time} (\emph{late}). One should note that some late packets might be erased (not be successful) and will never arrive at the receiving node. By the definition, however, all the on-time packets are successful. It should be clear that some of the $m+1$ packets might not be on-time, and the on-time packets might not arrive in the same order that they were transmitted (any possible subset of the $m+1$ packets might be on-time with any possible ordering). The innovation of a packet at the time of reception, however, is dependent on the set of packets that arrived earlier along with the order in which they arrive. We thus need to differentiate between the two partitions of on-time and late packets.

For every possible subset of on-time packets, let us consider a binary sequence (of $m+1$ elements) $\boldsymbol{b}=\{b_{t'_{\tau^{*}}},\dots,b_{t'_{\tau+1}}\}$, such that, for all $\tau^{*}\leq j\leq \tau+1$, $b_{t'_j}$ is $1$, if the packet transmitted at the time $t'_j$ is assumed to be on-time, and $b_j$ is $0$, otherwise. In particular, for every $\boldsymbol{z}=\{z_{t'_{\tau^{*}}},\dots,z_{t'_{\tau+1}}\}$, the packet transmitted at the time $t'_j$ is assumed to be on-time and to experience a delay $z_{t'_j}$, if $b_{t'_j}$ is $1$, and the packet will be late (i.e., either is successful but does not arrive on-time, or is not successful and is erased), if $b_{t'_j}$ is $0$. Thus the (joint) probability that all the packets whose corresponding binary elements are $1$ arrive on-time with their corresponding delays, and that the rest of the packets are late (regardless of their corresponding delay values), is \begin{dmath*}p_{_{\boldsymbol{b},\boldsymbol{z}}}=\prod_{\tau^{*}\leq j\leq \tau+1}\left(b_{t'_j} p[z_{t'_j}]+(1-b_{t'_j})\left(1-\sum_{n-t'_j<z_j\leq n-t'_j+\Delta}p[z_j] + p^{(i)}_e\right)\right),\end{dmath*} for every $\boldsymbol{b}\in \{0,1\}^{m+1}$, and every desirable sequence $\boldsymbol{z}$, where $p[z_{t'_j}]$ is given in~\eqref{EQ4} and~\eqref{EQ5}, for every $\tau^{*}\leq j\leq \tau$, and $j=\tau+1$, respectively. (For the cases of $\omega\neq \nu$, the sequence $\boldsymbol{b}$ excludes the term $b_{m+1}$, and we denote the truncated sequence $\boldsymbol{b}$ with $\boldsymbol{b}_T\in\{0,1\}^{m}$.)

% Note that the probability $p_{_{\boldsymbol{b},\boldsymbol{z}}}$ is a conditional probability given the two sequences of $\boldsymbol{b}$ and $\boldsymbol{z}$, and each sequence $\boldsymbol{b}$ or $\boldsymbol{z}$ has the probability $2^{-(m+1)}$, or $\Delta^{-(m+1)}$, respectively.

% Each sequence $\boldsymbol{b}$ or $\boldsymbol{z}$ has the probability $2^{-m}$, or $\Delta^{-m}$, respectively.

For every $\boldsymbol{b}$, let $m^{*}$ denote the number of $1$'s in $\boldsymbol{b}$. Now, consider the subset $\{z_{{i_1}},\dots,z_{{i_{m^{*}}}}\}$ of the elements of $\boldsymbol{z}$ whose corresponding elements in the sequence $\boldsymbol{b}$ are $1$. Correspondingly, let $\{i_1,\dots,i_{m^{*}}\}$ denote the associated sequence of the transmission time indices. For every $1\leq \ell\leq m^{*}$, let us define $\mathcal{N}_{{\boldsymbol{b},\boldsymbol{z}}|\ell}$ as the subset of all the reception time indices $\{i_1+z_{i_1},\dots,i_{\ell}+z_{i_{\ell}}\}$ whose corresponding packets are innovative to the set of packets with the reception time indices $\mathcal{N}_{\boldsymbol{b},\boldsymbol{z}|\ell-1}\cup \mathcal{N}_{r}$. Note that, $\mathcal{N}_{\boldsymbol{b},\boldsymbol{z}|0}$ is the empty set.

To indicate whether the $\ell\textsuperscript{th}$ packet is innovative at the time of reception, we introduce an indicator variable $I_{\boldsymbol{b},\boldsymbol{z}|\ell}$ defined as follows: $I_{\boldsymbol{b},\boldsymbol{z}|\ell}$ is $1$, if the packet with the reception time $i_{\ell}+z_{i_{\ell}}$ is innovative to the set of packets with the reception time indices $\mathcal{N}_{\boldsymbol{b},\boldsymbol{z}|\ell-1}\cup \mathcal{N}_{r}$, and $I_{\boldsymbol{b},\boldsymbol{z}|\ell}$ is $0$, otherwise. Thus, for every $\nu$, at time $n$, the expected number of innovative $\nu$-packets transmitted over the $i\textsuperscript{th}$ link prior to the next transmission time, given that an innovative $\nu$-packet is transmitted over the $i\textsuperscript{th}$ link at time $n$, can be calculated as \begin{equation}\label{EQ6}x_n^{(i)}(\nu|\nu)=\sum_{\boldsymbol{b}}\sum_{\boldsymbol{z}} p_{_{\boldsymbol{b},\boldsymbol{z}}}\cdot \left(\rho+\sum_{1\leq \ell\leq m^{*}}I_{\boldsymbol{b},\boldsymbol{z}|\ell}\right).\end{equation} Similarly, for every $\omega\neq\nu$, $x_n^{(i)}(\nu|\omega)$ can be calculated by~\eqref{EQ6}, where $\boldsymbol{b}$ and $\boldsymbol{z}$ are replaced with $\boldsymbol{b}_T$ and $\boldsymbol{z}_T$, respectively, i.e., \begin{equation}\label{EQ7}x_n^{(i)}(\nu|\omega)=\sum_{\boldsymbol{b}_T}\sum_{\boldsymbol{z}_T} p_{_{\boldsymbol{b}_T,\boldsymbol{z}_T}}\cdot \left(\rho+\sum_{1\leq \ell\leq m^{*}_T}I_{\boldsymbol{b}_T,\boldsymbol{z}_T|\ell}\right),\end{equation} where $m^{*}_T$ denotes the number of $1$'s in $\boldsymbol{b}_T$. Note that, since $y^{(i)}_n(\nu)=x^{(i)}_{n}(\nu|\omega)$ for every $\omega$ ($\neq\nu$), the metric $y_n^{(i)}(\nu)$ for MCMF can also be calculated by~\eqref{EQ7}.

\begin{table*}[t]
  \centering
  \caption{Parameters of Delay Models Used in the Simulations}
%  \hspace{-0.25in}
    \begin{tabular}{|cr|c|c|c|r|r|}
    \hline
    \multicolumn{2}{|c|}{Network Length} & \multicolumn{5}{c|}{$L$} \\
    \hline
    \multicolumn{2}{|c|}{Delay Model} & I & II & III & \multicolumn{1}{c|}{IV} & \multicolumn{1}{c|}{V} \\
    \hline\hline
    \multicolumn{2}{|c|}{$(\mu_i,\sigma_i)$} & $(0.5,0.5)$ & $(1,0.5)$ & $(1,1)$ & \multicolumn{1}{c|}{$\begin{array}{ll}(0.5,0.5), & \text{if } 1\leq i\leq {L}/{2}; \\ (1,1), & \text{otherwise.} \end{array}$} & \multicolumn{1}{c|}{$\begin{array}{ll}(1,1), & \text{if } 1\leq i\leq {L}/{2}; \\ (0.5,0.5), & \text{otherwise.} \end{array}$} \\
    \hline
    \multicolumn{2}{|c|}{$(\text{E}(R^{(i)}),\text{Var}(R^{(i)}))$} & $(1.86,0.99)$ & $(3.08,2.69)$ & $(4.48,34.51)$ & \multicolumn{1}{c|}{$\begin{array}{ll}(1.86,0.99), & \text{if } 1\leq i\leq {L}/{2}; \\ (4.48,34.51), & \text{otherwise.} \end{array}$} & \multicolumn{1}{c|}{$\begin{array}{ll}(4.48,34.51), & \text{if } 1\leq i\leq {L}/{2}; \\ (1.86,0.99), & \text{otherwise.} \end{array}$} \\
    \hline
    \end{tabular}%
  \label{tab:DelayModelsPars}%
\end{table*}%

\subsection{On the Amount of Feedback and the Computational Complexity of the Proposed Scheduling Policies}
It is worth noting that both MDF and MCMF require more feedback and more computations compared to RP and LRF. Part of the feedback in MDF and MCMF is required to transmit the link parameters, estimated at the receiver, to the transmitter. This however is not needed for RP and LRF policies. Moreover, in RP and LRF policies, the transmitting node only requires the set of innovative packets at the receiving node. It however does not require the departure/arrival time of such packets. Such information, on the other hand, is required to calculate the metrics in MDF and MCMF policies.

Unlike RP and LRF policies, MDF and MCMF policies need to estimate the link parameters (for erasure and delay). This increases the computational complexity and transmission overhead of the proposed policies.\footnote{Efficient techniques for link estimation can be found in~\cite{Paxson:1997,Moon:2000}, and are beyond the scope of this paper.} The main part of the computational complexity of MDF and MCMF policies is however dedicated to the calculation of $x_n^{(i)}(\nu|\omega)$, for every $\omega,\nu\in[q]$. This complexity corresponds to the calculation of the double-summations in~\eqref{EQ6} and/or~\eqref{EQ7} over all the desirable delay sequences $\boldsymbol{z}$ and/or $\boldsymbol{z}_T$, and the binary sequences $\boldsymbol{b}$ and/or $\boldsymbol{b}_T$. Part of the computations of the argument of the double-summations can be carried out offline, and the results can be stored for online use. (This part is the calculation of the values of $p_{_{\boldsymbol{b},\boldsymbol{z}}}$ and $p_{_{\boldsymbol{b}_T,\boldsymbol{z}_T}}$.) The values of $I_{\boldsymbol{b},\boldsymbol{z}|\ell}$ and $I_{\boldsymbol{b}_T,\boldsymbol{z}_T|\ell}$ however need to be computed online, as they depend on the actual set of innovative received packets. To determine each value of $I_{\boldsymbol{b},\boldsymbol{z}|\ell}$ or $I_{\boldsymbol{b}_T,\boldsymbol{z}_T|\ell}$, one needs to find the rank of a matrix formed by the global encoding vectors of the packets under consideration. It should however be noted that such operations are performed in the field associated with the linear coding scheme, and are in general negligible in comparison with packet operations required for encoding, particularly for larger packet sizes (see, e.g., \cite{MHL:2006}). In addition, for coding schemes operating over finite fields of large size, the summations $\sum_{1\leq \ell\leq m^{*}}I_{\boldsymbol{b},\boldsymbol{z}|\ell}$ and $\sum_{1\leq \ell\leq m^{*}_T}I_{\boldsymbol{b}_T,\boldsymbol{z}_T|\ell}$ in~\eqref{EQ6} and~\eqref{EQ7} can be simply approximated based on the number and the ordering of the on-time packets depending on the sequences $\boldsymbol{b}$ and $\boldsymbol{z}$, or $\boldsymbol{b}_T$ and $\boldsymbol{z}_T$, respectively.

Finally, as mentioned earlier, the computational complexity of MCMF is smaller than that of MDF. This arises from the following facts: (i) in MCMF, for every link $i$ and every time instant $n$, the metric $y^{(i)}_n(\nu)$ (which is equal to $x^{(i)}_{n}(\nu|\omega)$, for any $\omega\neq\nu$), for every chunk $\nu$, needs to be calculated. However, in MDF, the two metrics $x^{(i)}_{n}(\nu|\nu)$ and $x^{(i)}_{n}(\nu|\omega)$, for some $\omega\neq\nu$ need to be calculated. This implies that the computational complexity of MDF is at least twice the computational complexity of MCMF; (ii) in MDF, in order to calculate the metric $x^{(i)}_{n}(\nu|\nu)$, the sequences $\boldsymbol{b}$ and $\boldsymbol{z}$ are each of length $m+1$. However, in MCMF, in order to calculate the metric $y^{(i)}_n(\nu)=x^{(i)}_{n}(\nu|\omega)$, for some $\omega\neq\nu$, the sequences $\boldsymbol{b}_T$ and $\boldsymbol{z}_T$ are each of length $m$, and hence the calculation of the double-summation in~\eqref{EQ7} requires less computations compared to that in~\eqref{EQ6}; and (iii) In MDF, having the metric vectors $\boldsymbol{x}^{(i)}_n(\omega)$, for all chunks $\omega$, the Euclidian distances $d^{(i)}_n(\omega)$ need to be calculated, and then the chunk with minimum distance will be chosen. However, in MCMF, having the metrics $y^{(i)}_n(\nu)$, for all chunks $\nu$, the chunk with minimum metric will be chosen, and there is no need for further computation.
\begin{table*}[t]
  \centering
  \caption{Expected Delivery Time for Various Scheduling Policies over Identical Lossless Links with Delay}
%  \hspace{-0.25in}
    \begin{tabular}{|c|r|c|c|c|c|c|c|c|c|c|c|c|c|c|}
    \hline
    \multirow{11}[22]{*}{\vspace{1.5cm}\begin{sideways}Lossless\end{sideways}} & \multicolumn{2}{c|}{Delay Model} & \multicolumn{4}{c|}{I} & \multicolumn{4}{c|}{II} & \multicolumn{4}{c|}{III} \\
\cline{2-15}      & \multicolumn{2}{c|}{Network Length} & \multicolumn{2}{c|}{$2$} & \multicolumn{2}{c|}{$8$} & \multicolumn{2}{c|}{$2$} & \multicolumn{2}{c|}{$8$} & \multicolumn{2}{c|}{$2$} & \multicolumn{2}{c|}{$8$} \\
\cline{2-15}      & \multicolumn{2}{c|}{Chunk Size} & $8$ & $32$ & $8$ & $32$ & $8$ & $32$ & $8$ & $32$ & $8$ & $32$ & $8$ & $32$ \\
\cline{2-15}      & \multicolumn{1}{r|}{\multirow{5}[10]{*}{\vspace{0.5cm}\begin{sideways}{\fontsize{4.5}{4.5}\selectfont Scheduling Policy}\end{sideways}}} & Random & $156.45$ & $89.46$ & $331.78$ & $150.56$ & $162.80$ & $91.66$ & $345.86$ & $158.49$ & $167.66$ & $94.14$ & $351.62$ & $168.38$ \\
\cline{3-15}      &   & RP & $102.12$ & $81.82$ & $170.23$ & $135.08$ & $106.01$ & $85.19$ & $191.57$ & $149.30$ & $111.64$ & $88.59$ & $199.53$ & $155.91$ \\
\cline{3-15}      &   & LRF & $102.00$ & $79.81$ & $182.16$ & $130.21$ & $107.71$ & $83.63$ & $205.81$ & $143.21$ & $111.82$ & $86.15$ & $215.78$ & $151.56$ \\
\cline{3-15}      &   & MDF & $69.52$ & $70.77$ & $91.81$ & $96.26$ & $73.02$ & $76.07$ & $103.95$ & $111.75$ & $82.20$ & $86.05$ & $130.26$ & $130.64$ \\
\cline{3-15}      &   & MCMF & $76.41$ & $76.19$ & $111.12$ & $104.59$ & $91.15$ & $81.33$ & $142.42$ & $124.53$ & $107.73$ & $86.10$ & $153.48$ & $131.85$ \\
\cline{2-15}      & \multicolumn{2}{c|}{$I_1$ ($\%$)} & $31.84$ & $11.33$ & $46.07$ & $26.07$ & $31.12$ & $9.04$ & $45.74$ & $13.04$ & $26.37$ & $0.10$ & $34.72$ & $13.80$ \\
\cline{2-15}      & \multicolumn{2}{c|}{$I_2$ ($\%$)} & $25.08$ & $4.53$ & $34.72$ & $19.67$ & $14.01$ & $2.75$ & $25.65$ & $13.04$ & $3.50$ & $0.05$ & $23.07$ & $13.00$ \\
    \hline
    \end{tabular}%
  \label{tab:ExpSuccessTimeIdentical}%
\end{table*}%

\begin{table*}[t]
  \centering
  \caption{Expected Delivery Time for Various Scheduling Policies over Non-Identical Lossless Links with Delay}
    \begin{tabular}{|c|r|c|c|c|c|c|c|c|c|c|}
    \hline
    \multirow{11}[22]{*}{\vspace{1.5cm}\begin{sideways}Lossless\end{sideways}} & \multicolumn{2}{c|}{Delay Model} & \multicolumn{4}{c|}{IV} & \multicolumn{4}{c|}{V} \\
\cline{2-11}      & \multicolumn{2}{c|}{Network Length} & \multicolumn{2}{c|}{$2$} & \multicolumn{2}{c|}{$8$} & \multicolumn{2}{c|}{$2$} & \multicolumn{2}{c|}{$8$} \\
\cline{2-11}      & \multicolumn{2}{c|}{Chunk Size} & $8$ & $32$ & $8$ & $32$ & $8$ & $32$ & $8$ & $32$ \\
\cline{2-11}      & \multicolumn{1}{r|}{\multirow{5}[10]{*}{\vspace{0.5cm}\begin{sideways}{\fontsize{4.5}{4.5}\selectfont Scheduling Policy}\end{sideways}}} & Random & $161.80$ & $91.89$ & $340.62$ & $159.40$ & $162.38$ & $91.71$ & $341.56$ & $159.45$ \\
\cline{3-11}      &   & RP & $107.39$ & $86.00$ & $187.55$ & $145.37$ & $103.49$ & $84.84$ & $182.85$ & $144.87$ \\
\cline{3-11}      &   & LRF & $107.12$ & $83.69$ & $205.51$ & $141.70$ & $105.21$ & $83.38$ & $194.70$ & $140.80$ \\
\cline{3-11}      &   & MDF & $76.42$ & $79.83$ & $117.43$ & $118.78$ & $73.85$ & $77.04$ & $112.37$ & $117.80$ \\
\cline{3-11}      &   & MCMF & $86.21$ & $80.77$ & $148.91$ & $129.88$ & $94.59$ & $78.98$ & $137.45$ & $119.52$ \\
\cline{2-11}      & \multicolumn{2}{c|}{$I_1$ ($\%$)} & $28.65$ & $4.61$ & $37.38$ & $16.17$ & $28.64$ & $7.60$ & $38.54$ & $16.33$ \\
\cline{2-11}      & \multicolumn{2}{c|}{$I_2$ ($\%$)} & $19.52$ & $3.48$ & $20.60$ & $8.34$ & $8.59$ & $5.27$ & $24.82$ & $15.11$ \\
    \hline
    \end{tabular}%
  \label{tab:ExpSuccessTimeNonIdentical}%
\end{table*}%

\section{Simulation Results}\label{sec:SR}
We compare random, RP, LRF and MDF scheduling policies over line networks with one source node, one sink node and $L-1$ internal nodes connected in tandem. The comparisons are in terms of the expected delivery time (i.e., the expected time it takes for all the chunks to be decodable). The variables involved in the comparisons are the size of the chunks, the length of the network and the parameters of the delay and the loss models. We present the simulation results in two parts: lossless links with (random) delays, and lossy links with unit delays. By combining these results, one can easily generalize the results to the case of the links with both loss and delay. In each part, we consider two cases: identical links and non-identical links.

\subsection{Lossless Links with Delay}\label{subsec:LosslessLinksWithDelay}
We consider line networks of lengths $2$ and $8$ (i.e., $L\in\{2,8\}$). The links are assumed to be lossless, and for every $1\leq i\leq L$, the delay model of the $i\textsuperscript{th}$ link is specified as follows: The (continuous delay) probability distribution $f_{R^{(i)}}(r)$, used in~\eqref{EQ2}, is assumed to be log-normal\footnote{It has been recently shown that, in a variety of real-world packet networks, the delay can be modeled by a heavy-tailed distribution (i.e., the right, or left, or both tail(s) of the probability distribution function are not exponentially bounded), see, e.g., \cite{T:2009}. Examples of such distributions are log-normal, Pareto, and L\'{e}vy.} with the location and scale parameters $(\mu_i,\sigma_i)$, i.e., \[f_{R^{(i)}}(r)=\frac{1}{r \sigma_i \sqrt{2 \pi}}e^{-\frac{(\ln r - \mu_i)^2}{2\sigma^2_i}},\] where $\{(\mu_i,\sigma_i)\}$ are specified in Table~\ref{tab:DelayModelsPars}. The mean and the variance of a log-normal random variable with the location and scale parameters $(\mu,\sigma)$ are $e^{\mu+\sigma^2/2}$ and $(e^{\sigma^2}-1)e^{2\mu+\sigma^2}$, respectively. In the case of identical links, we consider three delay models, labeled as delay models \textmd{I}, \textmd{II} and \textmd{III}; and in the case of non-identical links, we consider two delay models, labeled as delay models \textmd{IV} and \textmd{V}.

The message size is assumed to be $64$. We consider two sizes of chunks: $8$ and $32$. For each set of chunk size, delay model and network length, $100$ network realizations are simulated, and the chunked coding scheme (over the binary field) along with each scheduling policy is applied to each network realization for $100$ trials. For the MDF and MCMF scheduling policies, the parameters $m$ and ${\Delta}$ are set to $4$ and $4$, respectively. It should be noted that the selected values of $m$ and $\Delta$ strike a good balance between the complexity of simulations and the accuracy of the results for the purpose of the comparisons in this paper. To determine the expected delivery time, for each case, the expectation is taken over the $100$ network realizations and the $100$ trials of the coding scheme.

Tables~\ref{tab:ExpSuccessTimeIdentical} and~\ref{tab:ExpSuccessTimeNonIdentical} list the expected delivery time for each scheduling policy in the case of identical and non-identical lossless links with delay, respectively. Each table quickly reveals that all the scheduling policies significantly outperform the random scheduling policy. The last two rows of each table present the relative performance of the proposed scheduling policies compared to the existing feedback-based scheduling policies. Parameters $I_1$ and $I_2$ are defined as $I_1 = \frac{\min\{RP,LRF\}-MDF}{\min\{RP,LRF\}}$ and $I_2=\frac{\min\{RP,LRF\}-MCMF}{\min\{RP,LRF\}}$, respectively, where, e.g., $LRF$ denotes the expected delivery time of the LRF scheduling policy.

As it can be seen in Table~\ref{tab:ExpSuccessTimeIdentical}, both MDF and MCMF policies outperform RP and LRF policies. The largest improvement in this table is $46.07\%$ and corresponds to the MDF scheduling policy over a network of length $8$ with delay model \textmd{I} where the chunk size is $8$. The improvement of MDF/MCMF policies over RP/LRF policies is larger for delays with smaller mean and variance. For example, considering the MDF scheduling policy, for the case of the chunk size $8$ and the network length $8$, it can be observed that $I_1=46.07\%$ for the delay model \textmd{I}. It is then reduced to $I_1=45.74\%$, for the delay model \textmd{II} (with larger mean and variance), and is further reduced to $34.72\%$ for the delay model \textmd{III}. Furthermore, the advantage of MDF/MCMF over RP/LRF becomes more for smaller chunks and larger networks. For example, in the case of MDF over the delay model \textmd{I} and the network length $8$, for the (larger) chunk size $32$, one can see that $I_1=26.07\%$, which is smaller than that for the (smaller) chunk size $8$ (i.e., $I_1=45.74\%$); or in the case of MDF over the delay model \textmd{I} and the chunk size $8$, for the (smaller) network length $2$, it can be seen that $I_1=31.84\%$, which is smaller than that for the (larger) network length $8$ (i.e., $I_1=46.07\%$). Similar trends can also be observed for the MCMF scheduling policy. Furthermore, comparing the advantages of MDF and MCMF over RP/LRF (by comparing the values of $I_1$ and $I_2$), it can be easily seen that MDF always outperforms MCMF (i.e., for each case, $I_1\geq I_2$).

Similarly, in Table~\ref{tab:ExpSuccessTimeNonIdentical}, for the case of non-identical links, one can observe similar trends as in the case of identical links, for a given delay model, i.e., the advantage of MDF/MCMF over RP/LRF is more pronounced for smaller chunks and larger networks.

Based on the results in Tables~\ref{tab:ExpSuccessTimeIdentical} and~\ref{tab:ExpSuccessTimeNonIdentical}, the relative performance of LRF and RP (or MDF and MCMF) compared to each other and compared to the random scheduling policy, are listed in Tables~\ref{tab:RelativeToRandomLosslessDelayIdentical} and~\ref{tab:RelativeToRandomLosslessDelayNonIdentical}. For each scheduling policy, e.g., LRF, $I_R$ is defined as $I_R=\frac{R-LRF}{R}$, where $R$ denotes the expected delivery time of the random scheduling policy. For the pair of scheduling policies RP and LRF (or MDF and MCMF), the parameter $I_E$ (or $I_P$) is defined as $I_E = \frac{LRF-RP}{LRF}$ (or $I_P=\frac{MCMF-MDF}{MCMF}$).

\begin{table*}[t]
  \centering
  \caption{Relative Performance of Scheduling Policies over Identical Lossless Links with Delay}
%  \hspace{-0.25in}
    \begin{tabular}{|r|c|c|c|c|c|c|c|c|c|c|c|c|c|c|c|}
    \hline
    \multicolumn{1}{|r|}{\multirow{9}[18]{*}{\vspace{1cm}\begin{sideways}Lossless\end{sideways}}} & \multicolumn{3}{c|}{Delay Model} & \multicolumn{4}{c|}{I} & \multicolumn{4}{c|}{II} & \multicolumn{4}{c|}{III} \\
\cline{2-16}    \multicolumn{1}{|r|}{} & \multicolumn{3}{c|}{Network Length} & \multicolumn{2}{c|}{$2$} & \multicolumn{2}{c|}{$8$} & \multicolumn{2}{c|}{$2$} & \multicolumn{2}{c|}{$8$} & \multicolumn{2}{c|}{$2$} & \multicolumn{2}{c|}{$8$} \\
\cline{2-16}    \multicolumn{1}{|r|}{} & \multicolumn{3}{c|}{Chunk Size} & $8$ & $32$ & $8$ & $32$ & $8$ & $32$ & $8$ & $32$ & $8$ & $32$ & $8$ & $32$ \\
\cline{2-16}    \multicolumn{1}{|r|}{} & \multirow{4}[8]{*}{\vspace{0.35cm}\begin{sideways}{\fontsize{4.5}{4.5}\selectfont Scheduling Policy}\end{sideways}} & RP & \multirow{4}[8]{*}{\vspace{0.45cm}$I_R$ ($\%$)} & $34.72$ & $8.54$ & $48.69$ & $10.28$ & $34.88$ & $7.05$ & $44.61$ & $5.79$ & $33.41$ & $5.89$ & $43.25$ & $7.40$ \\
\cline{3-3}\cline{5-16}    \multicolumn{1}{|r|}{} &   & LRF &   & $34.80$ & $10.78$ & $45.09$ & $13.51$ & $33.83$ & $8.76$ & $40.49$ & $9.64$ & $33.30$ & $8.48$ & $38.63$ & $9.98$ \\
\cline{3-3}\cline{5-16}    \multicolumn{1}{|r|}{} &   & MDF &   & $55.56$ & $20.89$ & $72.32$ & $36.06$ & $55.14$ & $17.00$ & $69.94$ & $29.49$ & $50.97$ & $8.59$ & $62.95$ & $22.41$ \\
\cline{3-3}\cline{5-16}    \multicolumn{1}{|r|}{} &   & MCMF &   & $51.16$ & $14.83$ & $66.50$ & $30.53$ & $44.01$ & $11.26$ & $58.82$ & $21.42$ & $35.74$ & $8.54$ & $56.35$ & $21.69$ \\
\cline{2-16}      & \multicolumn{3}{c|}{$I_{E}$ ($\%$)} & $-0.11$ & $-2.51$ & $+6.54$ & $-3.74$ & $+1.57$ & $-1.86$ & $+6.91$ & $-4.25$ & $+0.16$ & $-2.83$ & $+7.53$ & $-2.87$ \\
\cline{2-16}      & \multicolumn{3}{c|}{$I_{P}$ ($\%$)} & $9.01$ & $7.11$ & $17.37$ & $7.96$ & $19.89$ & $6.46$ & $27.01$ & $10.26$ & $23.69$ & $0.05$ & $15.12$ & $0.91$ \\
    \hline
    \end{tabular}%
  \label{tab:RelativeToRandomLosslessDelayIdentical}%
\end{table*}%

\begin{table*}[t]
  \centering
  \caption{Relative Performance of Scheduling Policies over Non-Identical Lossless Links with Delay}
    \begin{tabular}{|r|c|c|c|c|c|c|c|c|c|c|c|}
    \hline
    \multicolumn{1}{|r|}{\multirow{9}[18]{*}{\vspace{1cm}\begin{sideways}Lossless\end{sideways}}} & \multicolumn{3}{c|}{Delay Model} & \multicolumn{4}{c|}{IV} & \multicolumn{4}{c|}{V} \\
\cline{2-12}    \multicolumn{1}{|r|}{} & \multicolumn{3}{c|}{Network Length} & \multicolumn{2}{c|}{$2$} & \multicolumn{2}{c|}{$8$} & \multicolumn{2}{c|}{$2$} & \multicolumn{2}{c|}{$8$} \\
\cline{2-12}    \multicolumn{1}{|r|}{} & \multicolumn{3}{c|}{Chunk Size} & $8$ & $32$ & $8$ & $32$ & $8$ & $32$ & $8$ & $32$ \\
\cline{2-12}    \multicolumn{1}{|r|}{} & \multirow{4}[8]{*}{\vspace{0.35cm}\begin{sideways}{\fontsize{4.5}{4.5}\selectfont Scheduling Policy}\end{sideways}} & RP & \multirow{4}[8]{*}{\vspace{0.45cm}$I_R$ ($\%$)} & $33.62$ & $6.40$ & $44.93$ & $8.80$ & $36.26$ & $7.49$ & $46.46$ & $9.14$ \\
\cline{3-3}\cline{5-12}    \multicolumn{1}{|r|}{} &   & LRF &   & $33.79$ & $8.92$ & $39.66$ & $11.10$ & $35.20$ & $9.08$ & $42.99$ & $11.69$ \\
\cline{3-3}\cline{5-12}    \multicolumn{1}{|r|}{} &   & MDF &   & $52.76$ & $13.12$ & $65.52$ & $25.48$ & $54.52$ & $15.99$ & $67.10$ & $26.12$ \\
\cline{3-3}\cline{5-12}    \multicolumn{1}{|r|}{} &   & MCMF &   & $46.71$ & $12.10$ & $56.28$ & $18.51$ & $41.74$ & $13.88$ & $59.75$ & $25.04$ \\
\cline{2-12}      & \multicolumn{3}{c|}{$I_{E}$ ($\%$)} & $-0.25$ & $-2.76$ & $+8.73$ & $-2.59$ & $+1.63$ & $-1.75$ & $+6.08$ & $-2.89$ \\
\cline{2-12}      & \multicolumn{3}{c|}{$I_{P}$ ($\%$)} & $11.35$ & $1.16$ & $21.14$ & $8.54$ & $21.92$ & $2.45$ & $18.24$ & $1.43$ \\
    \hline
    \end{tabular}%
  \label{tab:RelativeToRandomLosslessDelayNonIdentical}%
\end{table*}%

We first focus on the existing scheduling policies RP and LRF and their relative performance (the rows related to $I_R$ for RP and LRF, and $I_E$, in both Tables~\ref{tab:RelativeToRandomLosslessDelayIdentical} and~\ref{tab:RelativeToRandomLosslessDelayNonIdentical}). In the case of identical links (Table~\ref{tab:RelativeToRandomLosslessDelayIdentical}), for RP or LRF, as the mean and the variance of the delay become larger (i.e., moving from delay model \textmd{I}, with the smallest mean and variance, towards the delay model \textmd{III}, with the largest mean and variance), the parameter $I_R$ decreases, i.e., RP or LRF is more advantageous over the random scheduling policy for networks with delays with smaller mean and variance. For example, focusing on the results for RP, in the case with the chunk size $8$ and the network length $8$, for the delay model I, $I_R=48.69\%$, and for the delay models II and III, $I_R$ is reduced to $44.61\%$ and $43.25\%$, respectively. It is also worth noting that, for a given delay model, as the size of the chunks is decreased or the length of the network is increased, the parameter $I_R$ increases. More interestingly, the results of the second last row of the table ($I_E$) demonstrates that the relative performance of RP and LRF compared to each other also depends on the delay model. In particular, as the mean and the variance of the delay are increased, or the size of the chunks is decreased, or the length of the network is increased, the relative performance of RP and LRF changes to the benefit of RP. In particular, for a given delay model and network length, RP outperforms LRF for a sufficiently small chunk size.\footnote{It is worth noting that, in~\cite{XZWX:2008}, LRF and RP policies were compared over a number of network scenarios, and for the tested cases, it was concluded that LRF is superior to RP in terms of the expected delivery time. However, our simulation results on line networks demonstrate that the relative performance of these policies highly depends on the link model.} For example, considering the case for the chunk size $8$ and the network length $8$, and focusing on the comparison between LRF and RP over identical links (in Table~\ref{tab:RelativeToRandomLosslessDelayIdentical}), one can see that for the delay model \textmd{I}, RP is superior ($I_E=+6.54\%$). For the delay model \textmd{II}, the advantage of RP becomes more ($I_E=+6.91\%$), and for the delay model \textmd{III} with the largest mean and variance, RP is even more advantageous ($I_E=+7.53\%$). Similar trends can also be observed for the larger chunk size $32$. However, a closer look reveals that, for larger chunk sizes, the transition between the relative superiority of LRF over RP occurs at delays with larger mean and variance. For example, for the network length $8$ and the delay model \textmd{II}, RP is superior to LRF for the chunk size $8$ ($I_E=+6.92\%$), but for the larger chunk size $32$, LRF is still superior ($I_E=-4.25\%$). For delays with smaller mean and variance, LRF is superior to RP, since, in this case, there is a higher chance for a smaller difference between the set of packets at the receiving node at the time of transmission and that at the time of reception. Thus by giving the opportunity of transmission to a chunk with the smallest number of packets at the receiving node, there is a higher chance in balancing the number of packets for all the chunks. For delays with larger mean and variance, however, there is a higher chance for a bigger difference between the underlying sets, and hence, distributing the transmission opportunities over a larger set of chunks yields more balance.

In the case of non-identical links (Table~\ref{tab:RelativeToRandomLosslessDelayNonIdentical}), for a given delay model, similar to the case of identical links, the performance improvement of RP and LRF over random scheduling improves as the chuck size is reduced or the network length is increased. Also, as it can be seen for sufficiently small chunks and sufficiently large networks, RP outperforms LRF (i.e., $I_E$ is positive). For larger chunks or smaller networks, $I_E$ becomes smaller and for sufficiently large chunks and sufficiently small networks, $I_E$ crosses zero and becomes negative (i.e., LRF outperforms RP).

Similarly, by comparing MDF and MCMF, for fixed parameters $m$ and $\Delta$, and their relative performance compared to the random scheduling (the rows representing $I_R$ for MDF and MCMF, and $I_P$, in both Tables~\ref{tab:RelativeToRandomLosslessDelayIdentical} and~\ref{tab:RelativeToRandomLosslessDelayNonIdentical}), one can conclude that (i) for each scheduling policy, $I_R$ is decreased for delays with larger mean and variance, and for a given delay model, $I_R$ is increased for smaller chunks and larger networks; (ii) for (a given delay model with) delays with sufficiently small mean and variance, $I_P$ is increased (i.e., the performance gap between MDF and MCMF is increased) as the size of the chunks decreases or the length of the network increases. (Similarly, for sufficiently small chunks and sufficiently large networks, as the mean and the variance of the delay decrease, $I_P$ is decreased.) For example, for the delay model I, considering the chunk size $8$, for (smaller) network of length $2$, $I_P=9.01\%$, and for (larger) network of length $8$, $I_P$ is increased to $17.37\%$. Similarly, considering the network of length $2$, for the smaller chunk size of $8$, $I_P=9.01\%$, and for the larger chunk size of $32$, $I_P=7.11\%$. Similar comparison results hold true for the delay model II. One should however note that for the delay model III, with the largest mean and variance, similar trends do not seem to hold true. For example, considering the chunk size $8$, for the networks of length $2$, $I_P=23.69\%$, and it is reduced down to $15.12\%$ for the (larger) network of length $8$.

To justify the different trend for the delay model \textmd{III}, we note that the results of Tables~\ref{tab:RelativeToRandomLosslessDelayIdentical} and~\ref{tab:RelativeToRandomLosslessDelayNonIdentical} are based on fixed parameters $m$ and $\Delta$, and as a consequence, for delays with larger mean and variance, the approximation error in the calculation of the metrics is increased. In other words, for sufficiently large $m$ and $\Delta$ (and fixed chunk size and fixed network length), the (monotonically improving) trend of the relative performance of MDF compared to MCMF indeed does not change as the mean and the variance of delays are increased. To verify this claim, we have performed another experiment described below.

\begin{table*}[t]
  \centering
  \caption{Expected Delivery Time for MDF/MCMF with Sufficiently Large Parameters $m$ and $\Delta$}
    \begin{tabular}{|r|r|r|r|r|c|r|r|r|r|r|r|r|}
    \hline
    \multicolumn{1}{|r|}{\multirow{16}[28]{*}{\vspace{1.5cm}\begin{sideways}Lossless\end{sideways}}} & \multicolumn{4}{c|}{Network Length} & \multicolumn{8}{c|}{$2$}  \\
\cline{2-13}    \multicolumn{1}{|r|}{} & \multicolumn{4}{c|}{Chunk Size} & \multicolumn{8}{c|}{$4$} \\
\cline{2-13}    \multicolumn{1}{|r|}{} & \multicolumn{4}{c|}{Delay Model} & \multicolumn{4}{c|}{II} & \multicolumn{4}{c|}{III} \\
\cline{2-13}    \multicolumn{1}{|r|}{} & \multicolumn{1}{r|}{\multirow{13}[22]{*}{\vspace{0.85cm}\begin{sideways}Scheduling Policy\end{sideways}}} & \multicolumn{1}{c|}{\multirow{6}[10]{*}{\vspace{0.275cm}MDF}} & \multicolumn{2}{c|}{\multirow{2}[2]{*}{\backslashbox{$\Delta$}{$m$}}} & \multirow{2}[2]{*}{$2$} & \multicolumn{1}{c|}{\multirow{2}[2]{*}{$3$}} & \multicolumn{1}{c|}{\multirow{2}[2]{*}{$4$}} & \multicolumn{1}{c|}{\multirow{2}[2]{*}{$5$}} & \multicolumn{1}{c|}{\multirow{2}[2]{*}{$2$}} & \multicolumn{1}{c|}{\multirow{2}[2]{*}{$3$}} & \multicolumn{1}{c|}{\multirow{2}[2]{*}{$4$}} & \multicolumn{1}{c|}{\multirow{2}[2]{*}{$5$}} \\
    \multicolumn{1}{|r|}{} & \multicolumn{1}{r|}{} & \multicolumn{1}{c|}{} & \multicolumn{2}{c|}{} &   & \multicolumn{1}{c|}{} & \multicolumn{1}{c|}{} & \multicolumn{1}{c|}{} & \multicolumn{1}{c|}{} & \multicolumn{1}{c|}{} & \multicolumn{1}{c|}{} & \multicolumn{1}{c|}{} \\
\cline{4-13}    \multicolumn{1}{|r|}{} & \multicolumn{1}{r|}{} & \multicolumn{1}{c|}{} & \multicolumn{2}{c|}{$2$} & $15.44$ & \multicolumn{1}{c|}{$15.34$} & \multicolumn{1}{c|}{$15.31$} & \multicolumn{1}{c|}{$15.30$} & \multicolumn{1}{c|}{$19.74$} & \multicolumn{1}{c|}{$19.64$} & \multicolumn{1}{c|}{$19.50$} & \multicolumn{1}{c|}{$19.35$} \\
\cline{4-13}    \multicolumn{1}{|r|}{} & \multicolumn{1}{r|}{} & \multicolumn{1}{c|}{} & \multicolumn{2}{c|}{$3$} & $15.33$ & \multicolumn{1}{c|}{$15.10$} & \multicolumn{1}{c|}{$14.98$} & \multicolumn{1}{c|}{$14.97$} & \multicolumn{1}{c|}{$19.54$} & \multicolumn{1}{c|}{$18.80$} & \multicolumn{1}{c|}{$18.75$} & \multicolumn{1}{c|}{$18.68$} \\
\cline{4-13}    \multicolumn{1}{|r|}{} & \multicolumn{1}{r|}{} & \multicolumn{1}{c|}{} & \multicolumn{2}{c|}{$4$} & $15.07$ & \multicolumn{1}{c|}{$14.98$} & \multicolumn{1}{c|}{$14.97$} & \multicolumn{1}{c|}{$14.97$} & \multicolumn{1}{c|}{$19.37$} & \multicolumn{1}{c|}{$18.73$} & \multicolumn{1}{c|}{$18.69$} & \multicolumn{1}{c|}{$18.65$} \\
\cline{4-13}    \multicolumn{1}{|r|}{} & \multicolumn{1}{r|}{} & \multicolumn{1}{c|}{} & \multicolumn{2}{c|}{$5$} & $14.99$ & \multicolumn{1}{c|}{$14.97$} & \multicolumn{1}{c|}{$14.97$} & \multicolumn{1}{c|}{$14.97$} & \multicolumn{1}{c|}{$19.20$} & \multicolumn{1}{c|}{$18.68$} & \multicolumn{1}{c|}{$18.65$} & \multicolumn{1}{c|}{$18.65$} \\
\cline{3-13}    \multicolumn{1}{|r|}{} & \multicolumn{1}{r|}{} & \multicolumn{1}{c|}{\multirow{6}[10]{*}{\vspace{0.275cm}MCMF}} & \multicolumn{2}{c|}{\multirow{2}[2]{*}{\backslashbox{$\Delta$}{$m$}}} & \multirow{2}[2]{*}{$2$} & \multicolumn{1}{c|}{\multirow{2}[2]{*}{$3$}} & \multicolumn{1}{c|}{\multirow{2}[2]{*}{$4$}} & \multicolumn{1}{c|}{\multirow{2}[2]{*}{$5$}} & \multicolumn{1}{c|}{\multirow{2}[2]{*}{$3$}} & \multicolumn{1}{c|}{\multirow{2}[2]{*}{$3$}} & \multicolumn{1}{c|}{\multirow{2}[2]{*}{$4$}} & \multicolumn{1}{c|}{\multirow{2}[2]{*}{$5$}} \\
    \multicolumn{1}{|r|}{} & \multicolumn{1}{r|}{} & \multicolumn{1}{c|}{} & \multicolumn{2}{c|}{} &   & \multicolumn{1}{c|}{} & \multicolumn{1}{c|}{} & \multicolumn{1}{c|}{} & \multicolumn{1}{c|}{} & \multicolumn{1}{c|}{} & \multicolumn{1}{c|}{} & \multicolumn{1}{c|}{} \\
\cline{4-13}    \multicolumn{1}{|r|}{} & \multicolumn{1}{r|}{} & \multicolumn{1}{c|}{} & \multicolumn{2}{c|}{$2$} & $15.89$ & \multicolumn{1}{c|}{$15.73$} & \multicolumn{1}{c|}{$15.54$} & \multicolumn{1}{c|}{$15.44$} & \multicolumn{1}{c|}{$20.15$} & \multicolumn{1}{c|}{$19.88$} & \multicolumn{1}{c|}{$19.64$} & \multicolumn{1}{c|}{$19.51$} \\
\cline{4-13}    \multicolumn{1}{|r|}{} & \multicolumn{1}{r|}{} & \multicolumn{1}{c|}{} & \multicolumn{2}{c|}{$3$} & $15.66$ & \multicolumn{1}{c|}{$15.37$} & \multicolumn{1}{c|}{$15.34$} & \multicolumn{1}{c|}{$15.33$} & \multicolumn{1}{c|}{$19.84$} & \multicolumn{1}{c|}{$19.24$} & \multicolumn{1}{c|}{$19.14$} & \multicolumn{1}{c|}{$19.13$} \\
\cline{4-13}    \multicolumn{1}{|r|}{} & \multicolumn{1}{r|}{} & \multicolumn{1}{c|}{} & \multicolumn{2}{c|}{$4$} & $15.47$ & \multicolumn{1}{c|}{$15.36$} & \multicolumn{1}{c|}{$15.33$} & \multicolumn{1}{c|}{$15.33$} & \multicolumn{1}{c|}{$19.63$} & \multicolumn{1}{c|}{$19.24$} & \multicolumn{1}{c|}{$19.14$} & \multicolumn{1}{c|}{$19.12$} \\
\cline{4-13}    \multicolumn{1}{|r|}{} & \multicolumn{1}{r|}{} & \multicolumn{1}{c|}{} & \multicolumn{2}{c|}{$5$} & $15.38$ & \multicolumn{1}{c|}{$15.35$} & \multicolumn{1}{c|}{$15.33$} & \multicolumn{1}{c|}{$15.33$} & \multicolumn{1}{c|}{$19.48$} & \multicolumn{1}{c|}{$19.24$} & \multicolumn{1}{c|}{$19.13$} & \multicolumn{1}{c|}{$19.12$} \\
\cline{3-13}      &   & \multicolumn{3}{c|}{Random} & \multicolumn{4}{c|}{$21.88$} & \multicolumn{4}{c|}{$24.47$} \\
    \hline
    \end{tabular}%
  \label{tab:UnboundedPars}%
\end{table*}%

Consider the transmission of a message of size $8$ over a line network of length $2$ with CC where the chunk size is $4$ (two chunks). In this experiment, we only consider the delay models II and III. For both MDF and MCMF policies, the parameters $m$ and $\Delta$ vary between $2$ and $5$, i.e., $2\leq m\leq 5$, and $2\leq \Delta\leq 5$. For each delay model, $100$ network realizations are simulated, and for each pair of choices of $m$ and $\Delta$, CC with MDF or MCMF scheduling policy is applied to each network realization for $100$ trials. The expected delivery time, for each case, is the average of the delivery time over all the simulated realizations of the code and the network realization. These results are presented in Table~\ref{tab:UnboundedPars}. The corresponding relative performances, $I_R$ (for each policy) and $I_P$, are also listed in Table~\ref{tab:RelativePerformanceUnboundedPars}. The results in Table~\ref{tab:UnboundedPars} demonstrate that, for MDF or MCMF, the expected delivery time reaches a limit as $m$ and $\Delta$ grow large (i.e., the approximation error in the metrics can be made sufficiently small by choosing $m$ and $\Delta$ sufficiently large). In the case of delay model II, for MDF and MCMF, this limit is equal to $14.97$ and $15.33$, respectively; and in the case of delay model III, it is equal to $18.65$ and $19.12$, respectively. It can be seen that the limit of the expected delivery time itself does change with the delay model. For each scheduling policy, MDF or MCMF, with sufficiently large $m$ and $\Delta$, as can be seen in Table~\ref{tab:RelativePerformanceUnboundedPars}, $I_R$ decreases (and it is not constant) as the mean and the variance of the delay increases. Furthermore, MDF becomes more advantageous compared to MCMF for delays with larger mean and variance (i.e., $I_P$ becomes larger as the mean and the variance of the delay are increased).

\begin{table}[t]
  \centering
  \caption{Relative Performance of MDF/MCMF with Sufficiently Large Parameters $m$ and $\Delta$}
    \begin{tabular}{|c|c|c|c|c|c|}
    \hline
    \multirow{6}[12]{*}{\vspace{0.5cm}\begin{sideways}Lossless\end{sideways}} & \multicolumn{3}{c|}{Network Length} & \multicolumn{2}{c|}{$2$} \\
\cline{2-6}      & \multicolumn{3}{c|}{Chunk Size} & \multicolumn{2}{c|}{$4$} \\
\cline{2-6}      & \multicolumn{3}{c|}{Delay Model} & II & III \\
\cline{2-6}      & \multirow{2}[4]{*}{\begin{sideways}{\fontsize{4.5}{4.5}\selectfont Policy}\end{sideways}} & MDF & \multirow{2}[4]{*}{\vspace{0.15cm}$I_R$ ($\%$)} & $31.58$ & $23.78$ \\
\cline{3-3}\cline{5-6}      &   & MCMF &   & $29.93$ & $21.86$ \\
\cline{2-6}      & \multicolumn{3}{c|}{$I_P$ ($\%$)} & $2.34$ & $2.45$ \\
    \hline
    \end{tabular}%
  \label{tab:RelativePerformanceUnboundedPars}%
\end{table}%

\begin{table}[t!]
  \centering
  \caption{Parameters of Loss Models Used in the Simulations}
    \begin{tabular}{|c|c|c|c|c|}
    \hline
    \multicolumn{2}{|c|}{Network Length} & \multicolumn{3}{c|}{$L$} \\
    \hline
    \multicolumn{2}{|c|}{Loss Model} & I & II & III \\
    \hline\hline
    \multicolumn{2}{|c|}{$p_e^{(i)}$} & $\frac{1}{3}$ & $\frac{1}{3}\cdot\frac{i}{L}$ & $\frac{1}{3}\cdot\frac{L-i+1}{L}$ \\
    \hline
    \end{tabular}%
  \label{tab:LossModelsPars}%
\end{table}%

\subsection{Lossy Links with Unit Delays}\label{subsec:LossyLinksWithUnitDelay}
The scenarios considered in this part are very similar to those in Section~\ref{subsec:LosslessLinksWithDelay}, except that the links are lossy and their loss model is specified in Table~\ref{tab:LossModelsPars}, and that the delay model of each link is the unit-delay model. (In particular, the loss model I considers a case with identical links with erasure probability $\frac{1}{3}$, and the two loss models II and III represent two cases with non-identical links.)

\begin{table}[t]
  \centering
  \caption{Expected Delivery Time for Various Scheduling Policies over Identical Lossy Links with Unit Delays}
    \begin{tabular}{|c|r|c|c|c|c|c|}
    \hline
    \multirow{10}[20]{*}{\vspace{1cm}\begin{sideways}Unit-Delay\end{sideways}} & \multicolumn{2}{c|}{Loss Model} & \multicolumn{4}{c|}{I} \\
\cline{2-7}      & \multicolumn{2}{c|}{Network Length} & \multicolumn{2}{c|}{$2$} & \multicolumn{2}{c|}{$8$} \\
\cline{2-7}      & \multicolumn{2}{c|}{Chunk Size} & $8$ & $32$ & $8$ & $32$ \\
\cline{2-7}      & \multicolumn{1}{r|}{\multirow{5}[10]{*}{\vspace{0.5cm}\begin{sideways}{\fontsize{4.5}{4.5}\selectfont Scheduling Policy}\end{sideways}}} & Random & $321.21$ & $181.03$ & $656.89$ & $312.18$ \\
\cline{3-7}      &   & RP & $191.00$ & $163.31$ & $322.59$ & $269.00$ \\
\cline{3-7}      &   & LRF & $192.14$ & $162.25$ & $334.02$ & $265.20$ \\
\cline{3-7}      &   & MDF & $148.47$ & $148.47$ & $224.46$ & $224.46$ \\
\cline{3-7}      &   & MCMF & $148.47$ & $148.47$ & $224.46$ & $224.46$ \\
\cline{2-7}      & \multicolumn{2}{c|}{$I_1$ ($\%$)} & $22.26$ & $8.49$ & $30.41$ & $15.36$ \\
\cline{2-7}      & \multicolumn{2}{c|}{$I_2$ ($\%$)} & $22.26$ & $8.49$ & $30.41$ & $15.36$ \\
    \hline
    \end{tabular}%
  \label{tab:ExpSuccessTimeLossyUnitDelayIdentical}%
\end{table}%

\begin{table*}[t]
  \centering
  \caption{Expected Delivery Time for Various Scheduling Policies over Non-Identical Lossy Links with Unit Delays}
    \begin{tabular}{|c|r|c|c|c|c|c|c|c|c|c|}
    \hline
    \multirow{10}[20]{*}{\vspace{1cm}\begin{sideways}Unit-Delay\end{sideways}} & \multicolumn{2}{c|}{Loss Model} & \multicolumn{4}{c|}{II} & \multicolumn{4}{c|}{III} \\
\cline{2-11}      & \multicolumn{2}{c|}{Network Length} & \multicolumn{2}{c|}{$2$} & \multicolumn{2}{c|}{$8$} & \multicolumn{2}{c|}{$2$} & \multicolumn{2}{c|}{$8$} \\
\cline{2-11}      & \multicolumn{2}{c|}{Chunk Size} & $8$ & $32$ & $8$ & $32$ & $8$ & $32$ & $8$ & $32$ \\
\cline{2-11}      & \multicolumn{1}{r|}{\multirow{5}[10]{*}{\vspace{0.5cm}\begin{sideways}{\fontsize{4.5}{4.5}\selectfont Scheduling Policy}\end{sideways}}} & Random & $269.87$ & 159.19 & 478.79 & 225.47 & 280.45 & 157.52 & 494.84 & 224.17 \\
\cline{3-11}      &   & RP & $169.75$ & $144.27$ & $239.15$ & $195.80$ & $161.50$ & $141.35$ & $227.77$ & $190.32$ \\
\cline{3-11}      &   & LRF & $171.92$ & $142.74$ & $248.04$ & $193.79$ & $163.55$ & $140.72$ & $232.04$ & $191.44$ \\
\cline{3-11}      &   & MDF & $131.91$ & $131.91$ & $158.85$ & $158.85$ & $131.40$ & $131.40$ & $157.26$ & $157.26$ \\
\cline{3-11}      &   & MCMF & $131.91$ & $131.91$ & $158.85$ & $158.85$ & $131.40$ & $131.40$ & $157.26$ & $157.26$ \\
\cline{2-11}      & \multicolumn{2}{c|}{$I_1$ ($\%$)} & $22.29$ & $7.58$ & $33.57$ & $18.02$ & $18.63$ & $6.62$ & $30.95$ & $17.37$ \\
\cline{2-11}      & \multicolumn{2}{c|}{$I_2$ ($\%$)} & $22.29$ & $7.58$ & $33.57$ & $18.02$ & $18.63$ & $6.62$ & $30.95$ & $17.37$ \\
    \hline
    \end{tabular}%
  \label{tab:ExpSuccessTimeLossyUnitDelayNonIdentical}%
\end{table*}%

\begin{table}[t]
  \centering
  \caption{Relative Performance of Scheduling Policies over Identical Lossy Links with Unit Delays}
    \begin{tabular}{|r|c|c|c|c|c|c|c|}
    \hline
    \multicolumn{1}{|r|}{\multirow{9}[18]{*}{\vspace{1cm}\begin{sideways}Unit-Delay\end{sideways}}} & \multicolumn{3}{c|}{Loss Model} & \multicolumn{4}{c|}{I} \\
\cline{2-8}    \multicolumn{1}{|r|}{} & \multicolumn{3}{c|}{Network Length} & \multicolumn{2}{c|}{$2$} & \multicolumn{2}{c|}{$8$} \\
\cline{2-8}    \multicolumn{1}{|r|}{} & \multicolumn{3}{c|}{Chunk Size} & $8$ & $32$ & $8$ & $32$ \\
\cline{2-8}    \multicolumn{1}{|r|}{} & \multirow{4}[8]{*}{\vspace{0.35cm}\begin{sideways}{\fontsize{4.5}{4.5}\selectfont Scheduling Policy}\end{sideways}} & RP & \multirow{4}[8]{*}{\vspace{0.45cm}$I_R$ ($\%$)} & $40.53$ & $9.78$ & $50.89$ & $13.83$ \\
\cline{3-3}\cline{5-8}    \multicolumn{1}{|r|}{} &   & LRF &   & $40.18$ & $10.37$ & $49.15$ & $15.04$ \\
\cline{3-3}\cline{5-8}    \multicolumn{1}{|r|}{} &   & MDF &   & $53.77$ & $17.98$ & $65.82$ & $28.09$ \\
\cline{3-3}\cline{5-8}    \multicolumn{1}{|r|}{} &   & MCMF &   & $53.77$ & $17.98$ & $65.82$ & $28.09$ \\
\cline{2-8}      & \multicolumn{3}{c|}{$I_{E}$ ($\%$)} & $+0.59$ & $-0.65$ & $+3.42$ & $-1.43$ \\
\cline{2-8}      & \multicolumn{3}{c|}{$I_{P}$ ($\%$)} & $0.00$ & $0.00$ & $0.00$ & $0.00$ \\
    \hline
    \end{tabular}%
  \label{tab:RelativeToRandomLossyUnitDelayIdentical}%
\end{table}%

\begin{table*}[t]
  \centering
  \caption{Relative Performance of Scheduling Policies over Non-Identical Lossy Links with Unit Delays}
    \begin{tabular}{|r|c|c|c|c|c|c|c|c|c|c|c|}
    \hline
    \multicolumn{1}{|r|}{\multirow{9}[18]{*}{\vspace{1cm}\begin{sideways}Unit-Delay\end{sideways}}} & \multicolumn{3}{c|}{Loss Model} & \multicolumn{4}{c|}{II} & \multicolumn{4}{c|}{III} \\
\cline{2-12}    \multicolumn{1}{|r|}{} & \multicolumn{3}{c|}{Network Length} & \multicolumn{2}{c|}{$2$} & \multicolumn{2}{c|}{$8$} & \multicolumn{2}{c|}{$2$} & \multicolumn{2}{c|}{$8$} \\
\cline{2-12}    \multicolumn{1}{|r|}{} & \multicolumn{3}{c|}{Chunk Size} & $8$ & $32$ & $8$ & $32$ & $8$ & $32$ & $8$ & $32$ \\
\cline{2-12}    \multicolumn{1}{|r|}{} & \multirow{4}[8]{*}{\vspace{0.35cm}\begin{sideways}{\fontsize{4.5}{4.5}\selectfont Scheduling Policy}\end{sideways}} & RP & \multirow{4}[8]{*}{\vspace{0.45cm}$I_R$ ($\%$)} & $37.09$ & $9.37$ & $50.05$ & $13.15$ & $42.41$ & $10.26$ & $53.97$ & $15.10$ \\
\cline{3-3}\cline{5-12}    \multicolumn{1}{|r|}{} &   & LRF &   & $36.29$ & $10.33$ & $48.19$ & $14.05$ & $41.68$ & $10.66$ & $53.10$ & $14.60$ \\
\cline{3-3}\cline{5-12}    \multicolumn{1}{|r|}{} &   & MDF &   & $51.12$ & $17.13$ & $66.82$ & $29.54$ & $53.14$ & $16.58$ & $68.22$ & $29.84$ \\
\cline{3-3}\cline{5-12}    \multicolumn{1}{|r|}{} &   & MCMF &   & $51.12$ & $17.13$ & $66.82$ & $29.54$ & $53.14$ & $16.58$ & $68.22$ & $29.84$ \\
\cline{2-12}      & \multicolumn{3}{c|}{$I_{E}$ ($\%$)} & $+1.26$ & $-1.07$ & $+3.58$ & $-1.03$ & $+1.25$ & $-0.44$ & $+1.84$ & $+0.58$ \\
\cline{2-12}      & \multicolumn{3}{c|}{$I_{P}$ ($\%$)} & $0.00$ & $0.00$ & $0.00$ & $0.00$ & $0.00$ & $0.00$ & $0.00$ & $0.00$ \\
    \hline
    \end{tabular}%
  \label{tab:RelativeToRandomLossyUnitDelayNonIdentical}%
\end{table*}%

Tables~\ref{tab:ExpSuccessTimeLossyUnitDelayIdentical} and~\ref{tab:ExpSuccessTimeLossyUnitDelayNonIdentical} list the expected delivery time for each scheduling policy in the case of identical and non-identical lossy links with unit delay, respectively. Based on the results in these tables, the relative performance of LRF and RP (or MDF and MCMF) compared to each other and compared to the random scheduling policy, are also listed in Tables~\ref{tab:RelativeToRandomLossyUnitDelayIdentical} and~\ref{tab:RelativeToRandomLossyUnitDelayNonIdentical}.

Based on the results in the tables, we observe the followings: (i) MDF and MCMF are always superior to RP and LRF ($I_1=I_2>0$); (ii) The advantage of MDF/MCMF over RP/LRF is larger for smaller chunks and larger networks (i.e., $I_1$ ($=I_2$) increases, as the size of the chunks is decreased, or as the length of the network is increased); (iii) The relative performance of RP vs. LRF depends on the chunk size and network length. For sufficiently small chunk sizes and sufficiently large networks, the relative performance of RP with respect to LRF improves (i.e., $I_E$ is increased) as the chunk size is reduced or as the network length is increased. This trend however reverses for sufficiently large chunks or sufficiently small networks; (iv) In all lossy scenarios with unit delay, MDF and MCMF perform the same ($I_P=0$). This is expected based on Lemma~\ref{lem:1}.

\subsection{On the (Near) Optimality of the MDF Scheduling Policy over Line Networks}\label{subsec:OptimalityMDF}
To verify the fact that the MDF scheduling policy is (near) optimal in the sense of minimizing the expected delivery time over line networks, we have performed the following experiment.

Consider a simple line network of length $1$ (one transmitting node and one receiving node). The link is assumed to be lossless, and two cases with two different delay models I and II are considered. The message size is $8$, and we consider chunks of size $4$ (i.e., the case of CC with two chunks). The parameters $m$ and $\Delta$ vary between $2$ and $4$, i.e., $2\leq m\leq 4$, and $2\leq \Delta\leq 4$. For each delay model, $100$ network realizations are simulated, and for each choice of parameters $m$ and $\Delta$, CC with the MDF scheduling policy is applied to each network realization. For each case (i.e., a given network realization), we consider all the possible choices of chunks (to be selected) at each transmission time, and set the maximum transmission time equal to $N_{\text{max}}$, for some $N_{\text{max}}\in\{16,32\}$ (i.e., we consider all the possible sequences of the chunk indices till the time $N_{\text{max}}$). We further focus on those sequences for which the (decoding) success occurs (i.e., for each chunk, among all the packets transmitted till the time $N_{\text{max}}$ by the transmitting node, $4$ innovative packets pertaining to that chunk are received at the receiving node). For each such successful sequence, we record the choice (index) of the chunk selected (to be transmitted) at the time $N_0$, for some $N_0\in\{4,8\}$ (we only pick two values of $N_0$ as considering all the possible values between $1$ and $N_{\text{max}}$ is too complex). For each chunk $\omega\in\{1,2\}$, we calculate the average of delivery times over all those (successful) sequences in which the chunk $\omega$ is selected at the time $N_0$, and denote it by $E(\omega)$. We also calculate (approximate) the two distances $d_{N_0}(1)$ and $d_{N_0}(2)$ between the two metrics vectors $\boldsymbol{x}_{N_0}(1)$ and $\boldsymbol{x}_{N_0}(2)$ (defined in Section~\ref{subsec:MDF}), and the target vector $[4,4]$, respectively. Let $\omega_{E}\doteq\arg\min_{\omega} E(\omega)$ and $\omega_{d}\doteq\arg\min_{\omega}d^{(1)}_{N_0}(\omega)$. Let us define an indicator variable $I$ such that $I=1$, if $\omega_E=\omega_d$; and $I=0$, otherwise. Note that $\omega_d$ is the chunk which will be selected based on the MDF scheduling policy at the transmission time $N_0$, and $\omega_E$ is the chunk whose selection at the transmission time minimizes the expected delivery time. Therefore, if $I=1$, then both events coincide, and in other words, the MDF policy minimizes the expected delivery time. Table~\ref{tab:OptimalityMDF} lists the average of the indicator variables $I$ (in percentage) for all the $100$ network realizations, for each delay model and each pair of parameters $m$ and $\Delta$. The closer is the expected value of $I$ to $100\%$, the closer the MDF policy would be to minimize the expected delivery time. As can be seen in the table, for each delay model, for sufficiently large values of $m$ and $\Delta$ (and for sufficiently large choice of $N_{\text{max}}$), $I$ is equal to $100\%$. This is particularly the case for sufficiently large $N_{\text{max}}$, and for sufficiently small $N_0$ in comparison with $N_{\text{max}}$. This would indicate that the MDF policy with sufficiently large $m$ and $\Delta$ (depending on the delay model parameters, the chunk size and the network length) achieves the minimum expected delivery time in each case.

\begin{table*}[t]
  \centering
  \caption{On the Optimality of the MDF Scheduling Policy}
    \begin{tabular}{|c|c|c|c|c|c|c|c|c|c|c|c|c|c|c|c|c|}
    \hline
    \multirow{14}[24]{*}{\vspace{1.5cm}\begin{sideways}Lossless\end{sideways}} & \multicolumn{4}{c|}{Network Length} & \multicolumn{12}{c|}{$1$} \\
\cline{2-17}      & \multicolumn{4}{c|}{Chunk Size} & \multicolumn{12}{c|}{$4$}  \\
\cline{2-17}      & \multicolumn{4}{c|}{$N_0$} & \multicolumn{6}{c|}{$4$} & \multicolumn{6}{c|}{$8$} \\
\cline{2-17}      & \multicolumn{4}{c|}{$N_{\text{max}}$} & \multicolumn{3}{c|}{$16$} & \multicolumn{3}{c|}{$32$} & \multicolumn{3}{c|}{$16$} & \multicolumn{3}{c|}{$32$} \\
\cline{2-17}      & \multirow{10}[16]{*}{\vspace{1cm}\begin{sideways}Delay Model\end{sideways}} & \multirow{5}[8]{*}{\vspace{0.5cm} I} & \multicolumn{2}{c|}{\multirow{2}[2]{*}{\backslashbox{$\Delta$}{$m$}}} & \multirow{2}[2]{*}{$2$} & \multirow{2}[2]{*}{$3$} & \multirow{2}[2]{*}{$4$} & \multirow{2}[2]{*}{$2$} & \multirow{2}[2]{*}{$3$} & \multirow{2}[2]{*}{$4$} & \multirow{2}[2]{*}{$2$} & \multirow{2}[2]{*}{$3$} & \multirow{2}[2]{*}{$4$} & \multirow{2}[2]{*}{$2$} & \multirow{2}[2]{*}{$3$} & \multirow{2}[2]{*}{$4$} \\
      &   &   & \multicolumn{2}{c|}{} &   &   &   &   &   &   &   &   &   &   &   & \\
\cline{4-17}      &   &   & \multicolumn{2}{c|}{$2$} & $98.5$ & $100$ & $100$ & $100$ & $100$ & $100$ & $97.5$ & $100$ & $100$ & $100$ & $100$ & $100$ \\
\cline{4-17}      &   &   & \multicolumn{2}{c|}{$3$} & $100$ & $100$ & $100$ & $100$ & $100$ & $100$ & $100$ & $100$ & $100$ & $100$ & $100$ & $100$ \\
\cline{4-17}      &   &   & \multicolumn{2}{c|}{$4$} & $100$ & $100$ & $100$ & $100$ & $100$ & $100$ & $100$ & $100$ & $100$ & $100$ & $100$ & $100$ \\
\cline{3-17}      &   & \multirow{5}[8]{*}{\vspace{0.5cm} II} & \multicolumn{2}{c|}{\multirow{2}[2]{*}{\backslashbox{$\Delta$}{$m$}}} & \multirow{2}[2]{*}{$2$} & \multirow{2}[2]{*}{$3$} & \multirow{2}[2]{*}{$4$} & \multirow{2}[2]{*}{$2$} & \multirow{2}[2]{*}{$3$} & \multirow{2}[2]{*}{$4$} & \multirow{2}[2]{*}{$2$} & \multirow{2}[2]{*}{$3$} & \multirow{2}[2]{*}{$4$} & \multirow{2}[2]{*}{$2$} & \multirow{2}[2]{*}{$3$} & \multirow{2}[2]{*}{$4$} \\
      &   &   & \multicolumn{2}{c|}{} &   &   &   &   &   &   &   &   &   &   &   &  \\
\cline{4-17}      &   &   & \multicolumn{2}{c|}{$2$} & $90$ & $90$ & $90$ & $100$ & $100$ & $100$ & $80.9$ & $87.0$ & $93.2$ & $84.6$ & $94.0$ & $99.6$ \\
\cline{4-17}      &   &   & \multicolumn{2}{c|}{$3$} & $90$ & $90$ & $90$ & $100$ & $100$ & $100$ & $80.9$ & $93.4$ & $93.7$ & $89.2$ & $99.5$ & $100$ \\
\cline{4-17}      &   &   & \multicolumn{2}{c|}{$4$} & $90$ & $90$ & $90$ & $100$ & $100$ & $100$ & $78.7$ & $93.2$ & $93.7$ & $90.9$ & $99.5$ & $100$ \\
    \hline
    \end{tabular}%
  \label{tab:OptimalityMDF}%
\end{table*}%

% Also, for fixed $N_{\text{max}}$ (or $N_0$), the larger (or the smaller) is $N_0$ (or $N_{\text{max}}$), the larger would be the error in the approximation of the expected value of $I$. In particular, the most accurate results in the table are those related to the case with $N_0=4$ and $N_{\text{max}}=32$.

\section{Conclusion}
We proposed two feedback-based policies, called minimum-distance-first (MDF) and minimum current metric first (MCMF), for scheduling the chunks in chunked codes over networks with delay and loss, where MCMF is a low-complexity version of MDF with a rather small loss in the performance. In contrast with the existing scheduling policies, random push (RP) and local-rarest-first (LRF), that prioritize the chunks based on the number of innovative received packets over the links (by using the feedback information) up to the time of the new transmission, MDF and MCMF incorporate the loss and delay models in the scheduling process, and operate based on the expected number of innovative successful packet transmissions at each node of the network prior to the next and current transmission time, respectively. To study the performance of the proposed scheduling policies in comparison with the existing ones, we used the log-normal and the unit delay models as well as the lossless and the Bernoulli loss model, over line networks. Our simulations showed that MDF and MCMF significantly outperform (by up to about $46\%$ and $34.72\%$, respectively, for the tested cases) the existing policies of RP and LRF in terms of the expected time required for all the chunks to be decodable. The performance improvements are specially larger for smaller chunks and larger networks. Such scenarios are of particular practical interest as smaller chunks translate to lower coding costs, and larger networks are just a fact of life with the continuous increase in the number of communication devices. The improvements come at the cost of more computations, and a slight increase in the amount of feedback. Our results also indicate that the relative performance of RP vs. LRF changes depending on the delay and loss models, the length of network and the size of chunks.

The performance comparison of the proposed and the existing scheduling policies over more general network topologies is also an interesting problem that requires more investigation. While such an investigation was not performed in this work, we expect that the proposed policies still outperform RP and LRF policies over more general network topologies. This would be particularly the case, if the network topology allows for the inclusion of the transmission times of the delayed packets of all the adjacent transmitter neighbors of a receiving node in the decision making process at each such transmitting neighbor.

\bibliographystyle{IEEEtran}
\bibliography{IEEEabrv,Refs}

\end{document}